\theoremstyle{remark}
\newenvironment{remark}
{\pushQED{\qed}\remarkx}
{\popQED\endremarkx}
\theoremstyle{definition}
\newtheorem{assump}{Assumption}
\newtheorem*{problem*}{Problem}
\newtheorem{problem}{Problem}
\theoremstyle{plain}
\newtheorem{theorem}{Theorem}
\newtheorem{lemma}{Lemma}
\newtheorem{prop}{Proposition}
\newcommand{\dist}{{\rm dist}}
\newcommand{\defeq}{:=} 
\newcommand{\matr}[1]{\begin{bmatrix} #1 \end{bmatrix}}
\newcommand{\transpose}[1]{#1^\top}
\newcommand{\norm}[1]{\left\lVert#1\right\rVert}
\newcommand{\set}[1]{\mathcal{#1}}
\newcommand{\setp}{\mathcal{P}}
\newcommand{\setc}{\mathcal{C}}
\newcommand{\mbr}[1][{}]{\mathbb{R}^{#1}}    
\newcommand{\dw}[1]{\frac{{\rm d} #1}{{\rm dw}}}
\newcommand{\cmmnt}[1]{}
\newcommand{\chiup}{\raisebox{2pt}{$\chi$}}
\newcommand{\vf}{\chiup}
\newcommand{\proj}[1]{{#1}^{\mathrm{prj}}}
\newcommand{\hgh}[1]{{#1}^{\mathrm{hgh}}}
\newcommand{\phy}[1]{{#1}^{\mathrm{phy}}}
\title{\LARGE \bf
	Vector Field Guided Path Following Control: \\ Singularity Elimination and Global Convergence
}
\author{Weijia Yao, H{\' e}ctor Garcia de Marina and Ming Cao 
	\thanks{Weijia Yao and Ming Cao are with ENTEG, University of Groningen, the Netherlands. Hector Garcia de Marina is with Universidad Complutense de Madrid, 28040 Madrid, Spain. {\tt\small \{w.yao,m.cao\}@rug.nl} {\tt\small hgdemarina@gmail.com}. Weijia Yao is funded by the China Scholarship Council (CSC). The work of H{\' e}ctor Garcia de Marina is supported by the grant Atraccion de Talento 2019-T2/TIC-13503 from the Government of the Autonomous Community of Madrid.  }%
}
\begin{document}

	\maketitle
	\thispagestyle{empty}
	\pagestyle{empty}

	\begin{abstract}
	Vector field guided path following (VF-PF) algorithms are fundamental in robot navigation tasks, but  may not deliver the desirable performance when robots encounter singular points where the vector field becomes zero. The existence of singular points prevents the global convergence of the vector field's integral curves to the desired path. Moreover, VF-PF algorithms, as well as most of the existing path following algorithms, fail to enable following a self-intersected desired path. In this paper, we show that such failures are fundamentally related to the mathematical topology of the path, and that by ``stretching" the desired path along a virtual dimension, one can remove the topological obstruction. Consequently, this paper proposes a new guiding vector field defined in a higher-dimensional space, in which self-intersected desired paths become free of self-intersections; more importantly, the new guiding vector field does \emph{not} have any singular points, enabling the integral curves to converge globally to the ``stretched'' path. We further introduce the \emph{extended dynamics} to retain this appealing global convergence property for the desired path in the original lower-dimensional space. Both simulations and experiments are conducted to verify the theory.
	\end{abstract}
	
	
	\section{Introduction}
	The path following control problem is to find suitable control laws for a mobile vehicle to converge to and traverse along a prescribed geometric path, which is, mathematically, a one-dimensional manifold. Path following is an essential function for many mobile robots \cite{siciliano2010robotics}, and recently related new applications have emerged, e.g., to probe atmospheric phenomena by drones \cite{lacroix2016fleets}. Among many path following algorithms \cite{Sujit2014,cai2009information,lu2014information}, vector field guided path following (VF-PF) algorithms are promising as they can achieve low path-following errors while requiring small control effort \cite{Sujit2014}. In VF-PF algorithms, a vector field is carefully designed such that its integral curves converge to the desired path \cite{nelson2007vector,lawrence2008lyapunov,Goncalves2010}. Such a vector field is also known as a \emph{guiding vector field} \cite{Y.A.2017}, since the desired velocity at each point of the field guides the robot. 
	
	A variety of VF-PF algorithms exist in the literature \cite{nelson2007vector,lawrence2008lyapunov,Goncalves2010, Y.A.2017,liang2016combined,nelson2006vector}. However, the existence of singular points where a vector field becomes zero 
	compromises the global convergence to the desired path. It also complicates the analysis of the algorithms \cite{Y.A.2017,Goncalves2010,lawrence2008lyapunov}. Sometimes singular points are unavoidable. For example, if a 2D desired path is a simple closed curve, then at least one singular point always exists within the region enclosed by the desired path, as disclosed by the Poicar\'e-Bendixson theorem \cite[Corollary 2.1]{khalil2002nonlinear}. The hairy ball theorem also guarantees the existence of singular points of continuous vector fields defined on the sphere $\mathbb{S}^2$ \cite[Theorem 13.32]{lee2010topologicalmanifolds}. Therefore, for example, if the desired path is a circle, a robot cannot start and then escape from the center, which is a singular point of the 2D guiding vector field \cite{Y.A.2017,Goncalves2010}. In addition, the normalization of the vector field at this point is not well-defined. To the best of our knowledge, few studies deal with singular points of guiding vector fields. Some studies (e.g., \cite{Goncalves2010,Rezende2018}) assume that the singular points are repulsive to simplify the analysis. This assumption is dropped in \cite{Y.A.2017} for the 2D case, but the paper concludes that the extensibility of the integral curves might be finite if there are singular points. In general, path following algorithms only guarantee local convergence to the desired path (e.g., \cite{fossen2003line,li2009design,park2007performance}).
	
	Following self-intersected desired paths is not achievable for existing VF-PF algorithms, since the crossing points of the desired path are also singular points of the vector field; hence, no guiding directions are available at the intersections. Thus, a robot gets stuck at the crossing points on the desired path (see Fig. \ref{fig:eightvf}). For this reason, the algorithms in \cite{Goncalves2010,Y.A.2017,nelson2007vector,lawrence2008lyapunov} are ineffective simply due to the violation of the (implicit) assumption: no singular points are allowed to be on the desired path. In fact, many other path following algorithms are only applicable to simple desired paths such as a straight line or a circle, or have not addressed the problem of following self-intersected paths (e.g. \cite{nelson2007vector}). For example, the classic line-of-sight (LOS) method \cite{fossen2003line,fossen2014line,rysdyk2006unmanned} fails in this case as there is no unique projection point in the vicinity of a crossing point of the desired path. 
	
	When obstacle-populated environments are considered, a closely related line of research is feedback motion planning, where a feedback plan is equivalently a vector field defined on some configuration space \cite[Chapter 8]{lavalle2006planning}. One of the most influential feedback motion planner is based on a navigation function \cite{rimon1992exact,dimarogonas2003decentralized,loizou2002closed}, but the derived vector field always has undesirable singular points, due to the topology of the configuration space that is punctured by obstacles \cite{rimon1992exact}. Therefore, eliminating these singular points in the \emph{same} configuration space is infeasible. Approximate cell decomposition and probabilistic roadmap methods are also used as general path following solutions \cite{cai2009information,lu2014information,zhang2009global,ko2014randomized}.

	\textbf{Our contributions}: As explained above, undesirable singular points of a vector field are common in many existing studies \cite{Goncalves2010,lawrence2008lyapunov,nelson2007vector,rimon1992exact}. To tackle the above issues caused by singular points, we propose a new idea to change the topology of the desired path, and transform the guiding vector field into a singularity-free one in a \emph{higher dimensional} configuration space. 
	Consequently, we rigorously guarantee the global convergence of the vector field's integral curves to the desired path, self-intersected or not. Therefore, we substantially improve the performance of the existing VF-PF algorithms. Moreover, we show that our proposed algorithm is a combined extension of both existing VF-PF and trajectory tracking algorithms.  Note that our approach can be easily generalized for desired paths in a higher-dimensional space \cite{yao2020singularity}. In addition, combining with the idea from \cite{yao2019integrated}, our approach can generate \emph{singularity-free} vector fields for robot navigation in obstacle-populated environments, removing the undesirable singularity in \cite{rimon1992exact}.
	
	This paper is organized as follows. Section \ref{sec2} introduces the 2D vector field and the motivation of our work. Then in Section \ref{sec3}, the 3D vector field and the problem definition are presented. The main results are elaborated in Section \ref{sec4}. In Section \ref{sec5}, an experiment is described, and the comparison with a trajectory tracking algorithm is presented in Section \ref{sec6}. Section \ref{sec7} concludes the paper. 
	
	\section{Motivation} \label{sec2}
	In the VF-PF problem, the desired path to follow is usually described by the zero-level set of a sufficiently smooth function. In particular, in the 2D Euclidean space $\mbr[2]$, the desired path is described below \cite{Y.A.2017}:
	\begin{equation} \label{eqpath2d}
	\overline{\setp} = \{ (x,y) \in \mbr[2] : \overline{\phi}(x,y) = 0\},
	\end{equation} 
	where $\overline{\phi}:\mbr[2] \to \mbr[]$, called the \emph{surface function}, is twice continuously differentiable to guarantee the existence and uniqueness \cite[Theorem 3.1]{khalil2002nonlinear} of the integral curves of the guiding vector field introduced in the sequel. Note that this description of the desired path without any parametrization is common in the field of VF-PF control \cite{yao2018cdc,Goncalves2010,Rezende2018,michalek2018vfo,morro2011path,do2015global,chen2011curve}. One can exploit the description (\ref{eqpath2d}) by using the absolute value of the surface function $|\overline{\phi}(p)|$ instead of the Euclidean distance $\dist(p, \overline{\setp}) \defeq \inf\{\norm{p-q} : q \in \overline{\setp}\}$ between a robot's position $p \in \mbr[2]$ and the desired path $\overline{\setp}$ to measure how far the robot is away from the desired path. For example, when $|\overline{\phi}(p)|=0$, the robot is precisely on the desired path. Thus, the 2D VF-PF problem is the design of a continuously differentiable vector field $\overline{\vf}: \mbr[2] \to \mbr[2]$ to show up on the right-hand side of the differential equation $\dot{p}(t) = \overline{\vf}(p(t))$. The design requires to satisfy two conditions:
	(a) There exists a neighborhood $\mathcal{D} \subseteq \mbr[2]$ of the desired path $\overline{\setp}$ in \eqref{eqpath2d} such that for all initial conditions $p(0) \in \mathcal{D}$, the distance $\dist(p(t), \overline{\setp})$ between the trajectory $p(t)$ and the desired path $\overline{\setp}$ approaches zero as time $t \to \infty$; that is, $\lim_{t \to \infty} \dist(p(t), \overline{\setp}) = 0$;
	(b) If a trajectory starts from the desired path, then it stays on the path for $t \ge 0$ (i.e., $p(0) \in \overline{\setp} \implies p(t) \in \overline{\setp}$ for all $t \ge 0$). 
	
	The trajectories (solutions) to the differential equation $\dot{p}(t) = \overline{\vf}(p(t))$ are called the \emph{integral curves} of the vector field $\overline{\vf}$. Therefore, we aim to find a suitable vector field $\overline{\vf}$ such that the integral curves approach the desired path eventually, and propagate (travel) along the desired path once it starts from it. We say that the integral curves of the vector field that satisfy the above two conditions assist a robot in converging to and travelling along the desired path. One example of such a vector field is given in \cite{Y.A.2017}:
	\begin{equation} \label{eqvf}
	\overline{\vf}(x,y) = E \nabla \overline{\phi}(x,y) - k \psi(\overline{\phi}(x,y)) \nabla \overline{\phi}(x,y),
	\end{equation}
	where $E=\left[\begin{smallmatrix}0 & -1 \\ 1 & 0\end{smallmatrix}\right]$ is the $90^\circ$ rotation matrix, $k$ is a positive constant, $\psi:\mbr[] \to \mbr[]$ is a strictly increasing function with $\psi(0)=0$, and $\nabla (\cdot)$ denotes the gradient of a scalar function $(\cdot)$. Note that one of the assumptions is that there are no singular points of the vector field on the desired path. However, the vector field must have at least one singular points on a self-intersected desired path as stated in the following proposition. 
	\begin{theorem}[Crossing Points are Singular Points] \label{prop1}
		Denote a crossing point of the desired path $\overline{\setp}$ in \eqref{eqpath2d} by $c \in \overline{\setp}$. Then $c$ is also a singular point of the vector field in \eqref{eqvf}; that is, $\overline{\vf}(c)=\boldsymbol{0}$.
	\end{theorem}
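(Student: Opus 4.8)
The plan is to reduce the claim to a single statement about the surface function, namely that its gradient vanishes at the crossing point: $\nabla \overline{\phi}(c) = \boldsymbol{0}$. Once this is established, the conclusion is immediate from the structure of \eqref{eqvf}. The rotation term $E \nabla \overline{\phi}(c)$ vanishes because $\nabla \overline{\phi}(c) = \boldsymbol{0}$ and $E$ is linear; the second term $k \psi(\overline{\phi}(c)) \nabla \overline{\phi}(c)$ vanishes for two independent reasons, since its gradient factor is zero and, because $c \in \overline{\setp}$ forces $\overline{\phi}(c) = 0$ together with $\psi(0) = 0$, its scalar factor is zero as well. Hence $\overline{\vf}(c) = \boldsymbol{0}$, so everything hinges on the gradient.

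To prove $\nabla \overline{\phi}(c) = \boldsymbol{0}$ I would argue by contradiction. Suppose $\nabla \overline{\phi}(c) \neq \boldsymbol{0}$. The quantitative fact driving the argument is the chain rule: if $\gamma$ is any $C^1$ curve contained in $\overline{\setp}$ with $\gamma(0) = c$, then $\overline{\phi}(\gamma(t)) \equiv 0$, and differentiating at $t=0$ yields $\nabla \overline{\phi}(c) \cdot \dot{\gamma}(0) = 0$. Thus every admissible tangent direction at $c$ lies in the orthogonal complement of $\nabla \overline{\phi}(c)$, which in $\mbr[2]$ is a one-dimensional subspace when the gradient is nonzero. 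Geometrically, this is exactly the statement of the implicit function theorem (regular level set theorem): since $\overline{\phi}$ is twice continuously differentiable, a nonzero gradient makes $c$ a regular point, so there is a neighborhood $U$ of $c$ in which $\overline{\setp} \cap U$ is a single $C^2$ embedded arc with a unique tangent line at $c$, orthogonal to $\nabla \overline{\phi}(c)$.

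The main obstacle is to pin down the informal notion of a crossing point so that it genuinely contradicts this conclusion. I would formalize a crossing point $c$ as one at which two branches of $\overline{\setp}$ meet transversally, i.e., there exist two $C^1$ curves $\gamma_1, \gamma_2 \subseteq \overline{\setp}$, both passing through $c$, whose velocities $\dot{\gamma}_1(0)$ and $\dot{\gamma}_2(0)$ are linearly independent. With this definition the contradiction is clean: by the chain-rule identity both velocities are orthogonal to $\nabla \overline{\phi}(c)$ and hence lie in the same one-dimensional subspace, so they cannot be linearly independent. This forces $\nabla \overline{\phi}(c) = \boldsymbol{0}$, completing the proof. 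The only subtlety is fixing the definition of a crossing point up front so that the ``two distinct tangent directions'' property is available; once that is settled, the remaining steps are routine applications of the chain rule and the implicit function theorem, and the substitution into \eqref{eqvf} described in the first paragraph finishes the argument.
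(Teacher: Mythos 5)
Your proposal is correct and follows essentially the same route as the paper: reduce everything to showing $\nabla \overline{\phi}(c)=\boldsymbol{0}$, then argue by contradiction using the regularity of the level set at a point with nonzero gradient. The only (minor) difference is that the paper derives the contradiction from the \emph{uniqueness} of the local solution curve given by the implicit function theorem, whereas you derive it from the chain rule plus an explicit transversality definition of a crossing point — your version is slightly more elementary and makes the informal notion of ``crossing point'' precise, though it covers only transversal (not tangential) self-intersections, which the paper's uniqueness argument would also handle.
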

	\begin{proof}
		Since $c \in \overline{\setp}$, we have $\overline{\phi}(c)=0$, and thus $\chi(c) = E \nabla \overline{\phi}(c)$ in view of \eqref{eqvf}. Next we show that the gradient at the crossing point $\nabla \overline{\phi}(c)$ is zero; hence $\chi(c)=0$. Suppose, on the contrary, the gradient is nonzero. Since $\nabla \overline{\phi}$ is continuously differentiable (as $\overline{\phi} \in C^2$), the implicit function theorem \cite{giaquinta2010mathematical} concludes that there is a unique curve $\gamma: (-\delta, \delta) \to \set{U}$ in a neighborhood $\set{U}$ of $c$ satisfying $\gamma(0)=c$ and $\overline{\phi} \circ \gamma(s) =0$ for all $s \in (-\delta, \delta)$. But this contradicts the fact that $\overline{\setp}$ is self-intersected. Therefore, the gradient at the crossing point is indeed zero, resulting in $\overline{\vf}(c)=0$.
	\end{proof} 
	In Fig. \ref{fig:eightvf}, one can check that the vector field at the crossing point is zero. We can attribute this to the loss of directional information at the crossing point: there is no obvious preference regarding which direction the trajectory should go towards. Thus, it is impossible to create a 2D vector field of which the integral curves follow the self-intersected desired path. In the following sections, we propose a solution to this inherent limitation by designing a new guiding vector field defined in the higher-dimensional Euclidean space $\mbr[3]$.
	\begin{figure}[tb]
		\centering
		\includegraphics[width=0.4\linewidth]{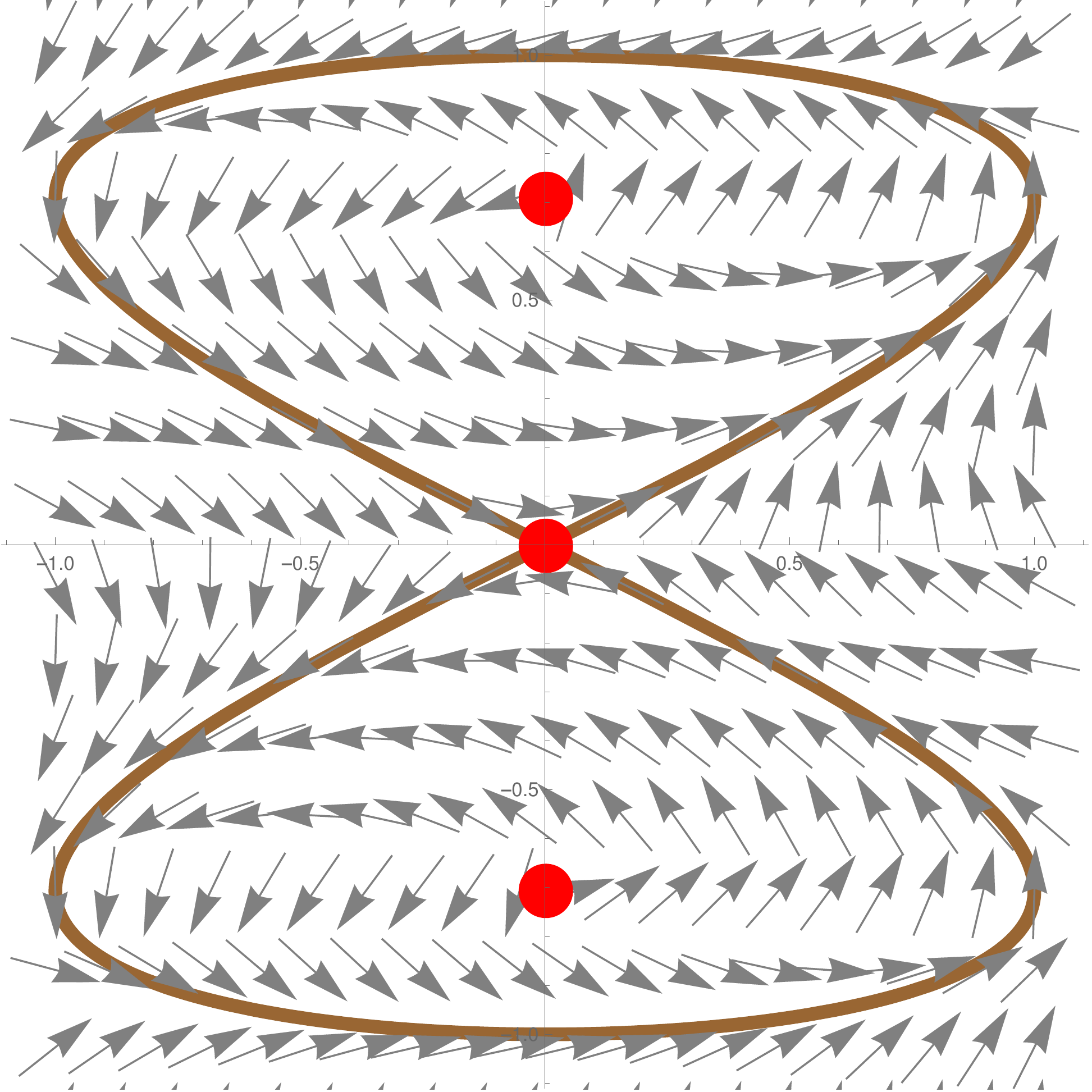} 
		\caption{The normalized vector field for a figure ``8'' path characterized by $\overline{\phi}(x,y) = x^2 - 4 y^2 (1-y^2) = 0$. The three red points are singular points of the vector field in \eqref{eqvf} where normalization is not well defined.}
		\label{fig:eightvf}
		\vspace{-0.5cm}
	\end{figure}
	
	\section{Problem Formulation} \label{sec3}
	We propose to add an additional dimension to the original 2D guiding vector field such that the directional information is ``recorded''. Thus, even at crossing points, the vector field's additional dimension can indicate the preferred direction of motion. 
	%
	
	Starting from the description of the 2D (physical) desired path $\phy{\setp}$ in \eqref{eqpath2d}, the (virtual) desired path $\hgh{\setp}$ in $\mbr[3]$ is naturally characterized by adding an additional constraint (or surface function) in $\mbr[3]$ as follows:
	\begin{equation} \label{path1}
		\hgh{\setp}= \{ p \in \mathbb{R}^3 : \phi_1(p)=0, \; \phi_2(p)=0 \},
	\end{equation}
	where $\phi_1, \phi_2 \in C^2$, $p=(x,y,w) \in \mbr[3]$ and $w$ is the additional dimension of the vector field. This implies that the desired path $\hgh{\setp}$ is the intersection of two surfaces described by $\phi_i=0, i=1,2$. The corresponding 3D guiding vector field $\vf: \mbr[3] \to \mbr[3]$ is \cite{yao2018cdc}:
	\begin{equation} \label{eqgvf}
	\vf(p) = \nabla \phi_1(p) \times \nabla \phi_2(p) - \sum_{i=1}^{2} k_i \phi_i(p) \nabla \phi_i(p).
	\end{equation}
	Here, the latter term $-\sum_{i=1}^{2} k_i \phi_i \nabla \phi_i$ is the sum of signed gradients, providing a direction towards the two surfaces characterized by $\phi_i=0$, and is intuitively called the \emph{converging term}. The first term $\nabla \phi_1 \times \nabla \phi_2$, as the cross product of the two gradients, provides a tangential direction to each surfaces $\phi_i(\xi)=0$, and is intuitively called the \emph{propagation term}. The set of all singular points of the vector field, called the singular set, is formally defined below:
	\begin{equation} \label{setc}
	\hgh{\setc} = \{c \in \mbr[3] : \chi(c)=\boldsymbol{0} \}.
	\end{equation}
	Some mild assumptions  \cite{yao2018cdc} are necessary for the VF-PF problem as listed below. We denote the converging term by $e_{\set{M}}(\cdot) = -\sum_{i=1}^{2} k_i \phi_i(\cdot) \nabla \phi_i(\cdot)$ and define the path-following error vector $e_{\setp}(\cdot)=\big( \phi_1(\cdot), \phi_2(\cdot) \big)$ in view of \eqref{path1}.
	\begin{assump} \label{assump1}
		There are no singular points on the desired path. More precisely, $\hgh{\setc}$ is empty or otherwise there holds $\dist(\hgh{\setc}, \hgh{\setp}) > 0$. 
	\end{assump}
	\begin{assump} \label{assump2}
		 As the norm of the path-following error $\norm{e_{\setp} \big( p(t) \big)}$ approaches zero, the trajectory $p(t)$ approaches the desired path $\hgh{\setp}$. Similarly, as the norm of the converging term $\norm{e_{\set{M}} \big( p(t) \big)}$ approaches zero, the trajectory $p(t)$ approaches the union of the sets $\hgh{\setp} \cup \hgh{\setc}$.	
	\end{assump}
	See \cite{yao2018cdc} for the precise mathematical expressions for Assumption \ref{assump2}. Note that Assumption \ref{assump1} is violated for a self-intersected desired path by Proposition \ref{prop1}, but we will extend it in a higher-dimensional space such that it becomes non-self-intersected. These assumptions are made to avoid pathological consequences and ensure that $\hgh{\setp}$ is a one-dimensional manifold \cite{yao2020auto}, consistent with practical applications. Note that the guiding vector field in \eqref{eqgvf} is nonlinear, and thus the differential equation $\dot{p}(t)=\chi(p(t))$ is generally difficult to analyze. However, under the above assumptions, the integral curves of the 3D vector field only have two outcomes; this is stated in the following lemma \cite{yao2018cdc}.
	\begin{lemma}[Dichotomy Convergence] \label{lemma1}
	The integral curves of the guiding vector field $\vf$ in \eqref{eqgvf} converge to either the desired path $\hgh{\setp}$, or the singular set $\hgh{\setc}$.
	\end{lemma}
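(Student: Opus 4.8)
The plan is to build a Lyapunov function whose gradient is precisely the converging term, and then to combine LaSalle's invariance principle with the separation guaranteed by Assumption \ref{assump1}. Concretely, I would take
\begin{equation*}
V(p) \defeq \frac{1}{2}\sum_{i=1}^{2} k_i \phi_i(p)^2 ,
\end{equation*}
which is nonnegative and vanishes exactly on $\hgh{\setp}$. The key structural fact is that $\nabla V = \sum_{i=1}^{2} k_i \phi_i \nabla\phi_i = -e_{\set{M}}$, i.e. the converging term is the negative gradient of $V$, while the propagation term $\nabla\phi_1\times\nabla\phi_2$ is orthogonal to both $\nabla\phi_1$ and $\nabla\phi_2$ and hence to $e_{\set{M}}$.

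First I would differentiate $V$ along an integral curve $\dot p = \vf(p)$. Using $\vf = \nabla\phi_1\times\nabla\phi_2 + e_{\set{M}}$ together with the orthogonality above,
\begin{equation*}
\dot V = \nabla V^\top \vf = -e_{\set{M}}^\top\!\left(\nabla\phi_1\times\nabla\phi_2\right) - \norm{e_{\set{M}}}^2 = -\norm{e_{\set{M}}}^2 \le 0 .
\end{equation*}
Thus $V$ is nonincreasing, every integral curve stays in a sublevel set of $V$, and---assuming this sublevel set is compact so that the trajectory is precompact---LaSalle's invariance principle applies, yielding convergence to the largest invariant subset of $\{p : e_{\set{M}}(p)=\boldsymbol 0\}$. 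By Assumption \ref{assump2}, $\norm{e_{\set{M}}}\to 0$ forces the trajectory to approach $\hgh{\setp}\cup\hgh{\setc}$; moreover, since the two terms of $\vf$ are orthogonal, $\norm{\vf}^2=\norm{\nabla\phi_1\times\nabla\phi_2}^2+\norm{e_{\set{M}}}^2$, so $\hgh{\setc}$ is exactly the set where both terms vanish. This establishes convergence of each integral curve to the union $\hgh{\setp}\cup\hgh{\setc}$.

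It remains to upgrade ``converges to the union'' into the stated dichotomy. Here I would invoke that the $\omega$-limit set $\Omega$ of a precompact trajectory is nonempty, compact, \emph{connected}, and invariant, and that $\Omega\subseteq\hgh{\setp}\cup\hgh{\setc}$ from the previous step. Assumption \ref{assump1} guarantees $\dist(\hgh{\setc},\hgh{\setp})>0$, so $\hgh{\setp}$ and $\hgh{\setc}$ are separated by a positive distance; a connected set contained in the disjoint union of two such separated sets must lie entirely within one of them. Hence $\Omega\subseteq\hgh{\setp}$ or $\Omega\subseteq\hgh{\setc}$, which is exactly the claimed dichotomy. The main obstacle I anticipate is not the Lyapunov computation, which is clean, but the two analytic prerequisites it silently uses: establishing forward completeness and precompactness of the integral curves on the non-compact state space $\mbr[3]$ (sublevel sets of $V$ need not be bounded when $\hgh{\setp}$ is unbounded), and rigorously justifying the connectedness of $\Omega$ so that the separation from Assumption \ref{assump1} can be turned into an ``either/or'' conclusion.
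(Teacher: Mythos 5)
The paper itself does not prove this lemma---it is quoted directly from \cite{yao2018cdc}---but your reconstruction follows essentially the same route as that reference: the Lyapunov function $V=\tfrac{1}{2}\sum_i k_i\phi_i^2$ with $\dot V=-\norm{e_{\set{M}}}^2$ by orthogonality of the propagation and converging terms, an invariance argument giving $\norm{e_{\set{M}}}\to 0$, Assumption~\ref{assump2} to pass to $\hgh{\setp}\cup\hgh{\setc}$, and the positive separation in Assumption~\ref{assump1} combined with connectedness of the $\omega$-limit set to obtain the either/or conclusion. The two prerequisites you flag at the end (forward completeness and precompactness of the trajectory, without which the $\omega$-limit set need not be nonempty, compact, and connected) are exactly the technical hypotheses imposed in the cited work, so your proof is correct as conditioned and matches the intended argument.
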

	
	Since the ultimate objective is to follow the 2D physical desired path $\phy{\setp}$, we need to project the (virtual) higher-dimensional desired path $\hgh{\setp}$ somehow. More specifically, a linear projection operator $P_a: \mbr[3] \to \mbr[3]$ is defined; this operator is a linear map that can project a vector to the hyperplane orthogonal to a given nonzero vector $a \in \mbr[3]$. With a slight abuse of notation, we use the same symbol for both this linear map and its matrix representation. Therefore,
	\begin{equation} \label{proj}
	P_a = I - \hat{a} \transpose{\hat{a}},
	\end{equation}
	where $I$ is the identity matrix of suitable dimensions and $\hat{a} := a / \norm{a}$. The \emph{projected desired path} and the \emph{projected singular set} are defined below:
	\begin{align}
	\proj{\setp} &= \{ q \in \mbr[3] : q = P_a \,  \xi, \; \xi \in \hgh{\setp} \} \label{pp} \\
	\proj{\setc} &= \{ q \in \mbr[3] : q = P_a \,  \xi, \; \xi \in \hgh{\setc} \}. \label{pc}
	\end{align}
	Therefore, to let the integral curves of the guiding vector field follow a self-intersected 2D desired path, there should be no singular points in the (higher-dimensional) guiding vector field (due to Proposition \ref{prop1}), which also implies the appealing feature of global convergence to the desired path (since $\hgh{\setc}=\emptyset$ in Lemma \ref{lemma1}). To sum up, the problem is formally formulated as follows:
	
	\begin{problem} \label{problem1}
		Given a (possibly self-intersected) physical desired path $\phy{\setp} \subseteq \mbr[2]$, we aim to find a higher-dimensional desired path $\hgh{\setp} \subseteq \mbr[3]$, which satisfies the following conditions\footnote{Topologically, the desired path is one-dimensional, independent of the dimensions of the Euclidean space where it is. However, for convenience, a desired path is called $n$-dimensional if it lives in the $n$-dimensional Euclidean space $\mbr[n]$ and not in any lower-dimensional subspace $\mathcal{W} \subseteq \mbr[n]$.}: 
		
		1) There exist functions $\phi_1, \phi_2 \in C^2$ such that $\hgh{\setp}$ is described by \eqref{path1};
		
		2) The singular set $\hgh{\setc}$ of the corresponding higher-dimensional vector field $\vf:\mbr[3] \to \mbr[3]$ in \eqref{eqgvf} is empty.
		
		3) There exists a projection operator $P_a$ in \eqref{proj} such that 
		the projected desired path $\proj{\setp}$ in \eqref{pp} satisfies $\proj{\setp} = \{(x,y,0) \in \mbr[3]: (x,y) \in \phy{\setp}$\};		
	\end{problem}
	As shown in the problem description, the essential approach is re-designing functions $\phi_i$ corresponding to the new desired path $\hgh{\setp}$ such that the rendered guiding vector field in \eqref{eqgvf} is singularity-free. In the next section, we propose a new idea to seek functions $\phi_i$, and hence a higher-dimensional desired path $\hgh{\setp} \subseteq \mbr[3]$.

	\section{Self-intersected Desired Path Following} \label{sec4}
	
	This section is split into three subsections regarding the solution to Problem \ref{problem1}. Firstly, we introduce the \emph{extended dynamics} related to higher-dimensional vector fields. Secondly, we provide the detailed construction of the 3D (virtual) desired path $\hgh{\setp}$ satisfying the conditions in Problem \ref{problem1}. Once such a 3D desired path is obtained, the 3D vector field will be automatically generated by \eqref{eqgvf}. Thirdly, a control algorithm is designed for a unicycle robot to follow a self-intersected desired path using the 3D vector field.
	
	\subsection{Extended Dynamics} \label{subsec1}
	We introduce the extended dynamics to derive a general result about the integral curves of the projected vector field $P_a \,  \vf$.
	\begin{lemma}[Extended Dynamics] \label{lemma2}
		Let $\vf: \mathcal{D} \subseteq \mbr[3] \to \mbr[3]$ be a vector field that is locally Lipschitz continuous. Suppose $p(t)$ is the unique solution to the initial value problem $\dot{p}(t)=\vf(p(t)), \; p(0)=p_0 \in \mathcal{D}$. Then $(p(t), \proj{p}(t))$, where $\proj{p}(t)=P_a \,  p(t)$ and $P_a$ is the projection operator in \eqref{proj} associated with a given nonzero vector $a \in \mbr[3]$, is the unique solution to the following initial value problem:
		\begin{equation} \label{eq6}
		\begin{cases}
		\dot{p}(t) =\vf(p(t)) & p(0) = p_0\\
		\dot{\proj{p}}(t) = P_a \,  \vf(p(t)) & \proj{p}(0) = P_a \,  p_0,
		\end{cases}
		\end{equation}
		Moreover, if the solution $p(t)$ converges to some set $\mathcal{A} \ne \emptyset \subseteq 
		\mbr[3]$, then the \emph{projected solution} $\proj{p}(t)$ converges to the \emph{projected set}
		$
		\mathcal{A}' = \{ q \in \mbr[3] :  q = P_a \,  \xi, \; \xi \in \mathcal{A} \}.
		$
	\end{lemma}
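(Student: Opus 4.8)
The plan is to exploit the \emph{triangular structure} of system \eqref{eq6}: its first line is an autonomous equation for $p(t)$ that does not involve $\proj{p}$, while the second line merely integrates $P_a\,\vf(p(t))$ driven by the already-determined trajectory $p(t)$. First I would verify that $(p(t), \proj{p}(t))$ with $\proj{p}(t) = P_a\, p(t)$ is indeed a solution. The first line of \eqref{eq6} holds by hypothesis. For the second, note that $P_a$ in \eqref{proj} is a constant matrix, hence a bounded linear operator; differentiating $\proj{p}(t) = P_a\, p(t)$ therefore gives $\dot{\proj{p}}(t) = P_a\, \dot{p}(t) = P_a\, \vf(p(t))$, and the initial condition $\proj{p}(0) = P_a\, p_0$ is immediate.

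Next I would argue uniqueness. Viewing \eqref{eq6} as an initial value problem for the combined state $(p, \proj{p})$, its right-hand side is $(\vf(p),\, P_a\, \vf(p))$. Since $\vf$ is locally Lipschitz and the linear map $P_a$ is (globally) Lipschitz, the composition $p \mapsto P_a\, \vf(p)$ is locally Lipschitz, so the combined right-hand side is locally Lipschitz in $(p, \proj{p})$; the standard existence–uniqueness theorem \cite[Theorem 3.1]{khalil2002nonlinear} then delivers a unique solution. Equivalently, because the $p$-component already has a unique solution by assumption and the $\proj{p}$-component is then pinned down by $\proj{p}(t) = P_a\, p_0 + \int_0^t P_a\, \vf(p(s))\,{\rm d}s = P_a\, p(t)$, the pair $(p(t), \proj{p}(t))$ is the unique solution.

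Finally I would establish the convergence claim. The key fact is that $P_a = I - \hat{a}\,\transpose{\hat{a}}$ is an orthogonal projection onto a hyperplane and is therefore nonexpansive, i.e. $\norm{P_a\, u} \le \norm{u}$ for every $u \in \mbr[3]$. Hence for any $\xi \in \mathcal{A}$ we have $P_a\, \xi \in \mathcal{A}'$ and $\norm{\proj{p}(t) - P_a\, \xi} = \norm{P_a(p(t) - \xi)} \le \norm{p(t) - \xi}$. Taking the infimum over $\xi \in \mathcal{A}$ on both sides yields $\dist(\proj{p}(t), \mathcal{A}') \le \dist(p(t), \mathcal{A})$, and since the right-hand side tends to $0$ by hypothesis, so does $\dist(\proj{p}(t), \mathcal{A}')$.

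I do not expect a serious obstacle, as the result is largely mechanical once the decoupled structure is recognized. The subtlest point is the final set-distance estimate: one must keep the direction of the inequality correct and invoke the nonexpansiveness of $P_a$ precisely, exploiting that each $P_a\,\xi$ lies in $\mathcal{A}'$ so that the distance to $\mathcal{A}'$ is bounded by the distance to $\mathcal{A}$ under the projection.
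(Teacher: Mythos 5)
Your proposal is correct and follows essentially the same route as the paper's proof: uniqueness via the local Lipschitz continuity of $P_a\,\vf$ (using $\norm{P_a}=1$, i.e.\ nonexpansiveness), and the convergence claim via the estimate $\dist(\proj{p}(t), \mathcal{A}') \le \dist(p(t), \mathcal{A})$ obtained by taking the infimum over $\mathcal{A}$. Your explicit verification that $(p(t), P_a\,p(t))$ solves the triangular system is a slightly fuller write-up of a step the paper leaves implicit, but the substance is identical.
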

	\begin{proof}
		Since $\vf$ is locally Lipschitz and $\norm{P_a} = 1$, where $\norm{\cdot}$ is the induced matrix $2$-norm, it follows that $P_a \,  \vf$ is also locally Lipschitz continuous. Therefore, $(p(t), \proj{p}(t))$, where $\proj{p}(t)=P_a \,  p(t)$, is the unique solution to \eqref{eq6}. 
		Fix $t$, then
 $
\dist(\proj{p}(t), \mathcal{A}') = \inf\{ \norm{P_a \, (p(t) - q)} : q \in \mathcal{A} \} 
\le \inf\{ \norm{P_a} \norm{p(t) - q} : q \in \mathcal{A} \}
= \dist(p(t), \mathcal{A}).
$
		Since $p(t)$ converges to $\mathcal{A}$, $\dist(p(t), \mathcal{A}) \to 0$ as $t \to \infty$. Namely, for any $\epsilon>0$, there exists a $T>0$, such that for all $t \ge T$, $\dist(p(t), \mathcal{A}) < \epsilon$; hence $\dist(\proj{p}(t), \mathcal{A}') \le \dist(p(t), \mathcal{A}) < \epsilon$. Therefore, $\dist(\proj{p}(t), \mathcal{A}') \to 0 $ as $t \to \infty$. Thus the projected solution $\proj{p}(t)$ converges to the projected set $\mathcal{A}'$.
	\end{proof}
	\begin{remark}
		Although we only consider the linear projection operator $P_a$ here, note that this is not a restriction, since a nonlinear projection operator $Q: \mbr[3] \to \mbr[3]$ can be similarly defined under some additional assumptions \cite{yao2020singularity}; e.g., the projected vector field should be replaced by $J(Q) \vf$, where $J(Q)$ is the Jacobian matrix function of $Q$ with respect to its arguments, and $J(Q)$ should be locally Lipschitz continuous. For clarity of exposition, we only investigate the linear projection operator, and thus we are able to show the intuitive graphical interpretation of our proposed approach in the sequel.
	\end{remark}
	One of the simplest projection operators $P_a$ is associated with the vector $a=\transpose{(0 , 0 , 1)}$. This can be used to project a three-dimensional vector to a 2D plane by ``zeroing'' the third coordinates. For the problem of path following, we can therefore design a suitable 3D vector field such that the first two components of the integral curves of the 3D vector field follow the projected 2D desired path. The advantage of this method is that singular points can be eliminated in the new vector field and the global convergence to the physical desired path is guaranteed as discussed below.
	
	\subsection{Construction of the 3D Virtual Desired Path}
	Suppose a 2D (physical) desired path $\phy{\setp}$ is parameterized by
	\begin{equation} \label{eq8}
	x = f_1(w), \quad y = f_2(w),
	\end{equation}
	where $w \in \mbr[]$ is the parameter of the path and $f_i \in C^2, i=1,2$. Then, we can simply let 
	\begin{equation} \label{eqsurface1}
	\phi_1(x,y,w) = x - f_1(w), \quad \phi_2(x,y,w) = y - f_2(w),
	\end{equation}
	such that the 3D virtual desired path is described by \eqref{path1} and \eqref{eqsurface1}. Intuitively, the 3D desired path $\hgh{\setp}$ is obtained by \emph{stretching} the 2D desired path $\phy{\setp}$ along the virtual $w$-axis (see Fig. \ref{fig:globallemniscate1}). Thus \emph{the first condition of Problem \ref{problem1} is satisfied}. One can calculate that $\nabla\phi_1 = \transpose{ \big(1 , 0 , -f'_1(w) \big)}$ and $\nabla\phi_2 = \transpose{ \big( 0 , 1 , -f'_2(w) \big) }$, where $f'_i(w) \defeq \dw{f_i(w)}, i=1,2$. Thus 
	\[
	\nabla\phi_1 \times \nabla\phi_2  = \transpose{(f'_1(w), f'_2(w), 1)}.
	\]
	It is interesting to note that the third coordinate of this vector is a constant $1$ regardless of the specific form of the desired path. This means that $\norm{\nabla\phi_1 \times \nabla\phi_2 } \ne 0$ in $\mathbb{R}^3$ globally. A closer examination of the guiding vector field \eqref{eqgvf} reveals that the propagation term $\nabla \phi_1(p) \times \nabla \phi_2(p)$ is orthogonal to the converging term $\sum_{i=1}^{2} -k_i \phi_i(p) \nabla \phi_i(p)$ due to the property of the cross product. Therefore, since $\norm{\nabla\phi_1 \times \nabla\phi_2 } \ne 0$ in $\mathbb{R}^3$ globally, an appealing property is that the vector field $\vf(p) \ne 0$ globally in $\mathbb{R}^3$. This means that the guiding vector field $\vf$ has no singular points. Therefore, \emph{the second condition of Problem \ref{problem1} is met}. Furthermore, the 2D desired path $\phy{\setp}$ is the projection of the 3D (virtual) desired path $\hgh{\setp}$ on the plane $w=0$. Therefore, the projection operator $P_a$ can be to associate with the vector $a=\transpose{(0 , 0 , 1)}$. Thus, \emph{the third condition of Problem \ref{problem1} is also satisfied}. Now we can state the following theorem.
	\begin{theorem} \label{thm1}
		Consider a 2D desired path $\phy{\setp} \subseteq \mbr[2]$ parametrized by \eqref{eq8}. Let $\phi_1$ and $\phi_2$ be chosen as in \eqref{eqsurface1}. Then there are no singular points in the corresponding three-dimensional vector field $\vf:\mbr[3] \to \mbr[3]$.  Let $a=\transpose{(0 , 0 , 1)}$ for the projection operator $P_a$. Suppose the projected solution to \eqref{eq6} is $\proj{p}(t) \defeq \transpose{(x(t), y(t), w(t))}$. Then the 2D trajectory $\proj{p}_2(t) \defeq \transpose{(x(t) , y(t))}$ will globally asymptotically converge to the 2D physical desired path $\phy{\setp}$ as $t \to \infty$.
	\end{theorem}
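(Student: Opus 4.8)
The plan is to obtain the statement as a short composition of the two lemmas already in hand, after settling the singularity-free property by finishing the computation begun just above the theorem.

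First I would dispatch the claim that $\vf$ has no singular points. From $\nabla\phi_1 = \transpose{(1,0,-f'_1(w))}$ and $\nabla\phi_2 = \transpose{(0,1,-f'_2(w))}$ the propagation term is $\nabla\phi_1\times\nabla\phi_2 = \transpose{(f'_1(w), f'_2(w), 1)}$, whose third entry is identically $1$, so $\norm{\nabla\phi_1\times\nabla\phi_2} \ge 1$ everywhere. Because the cross product is orthogonal to both gradients while the converging term $e_{\set{M}}(p) = -\sum_{i=1}^{2} k_i\phi_i\nabla\phi_i$ lies in $\mathrm{span}\{\nabla\phi_1,\nabla\phi_2\}$, the two summands of \eqref{eqgvf} are mutually orthogonal; hence $\norm{\vf(p)}^2 = \norm{\nabla\phi_1\times\nabla\phi_2}^2 + \norm{e_{\set{M}}(p)}^2 \ge 1$ for every $p \in \mbr[3]$. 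Thus $\vf(p) \ne \boldsymbol{0}$ globally and $\hgh{\setc} = \emptyset$, which is the first assertion.

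With $\hgh{\setc} = \emptyset$, Assumption \ref{assump1} holds trivially, so Lemma \ref{lemma1} applies to every integral curve of $\vf$, and the dichotomy retains only one branch: with no singular set to converge to, every integral curve, from any initial condition in $\mbr[3]$, must converge to $\hgh{\setp}$ -- this uniformity over initial data is exactly what upgrades the conclusion from local to global. I would next invoke Lemma \ref{lemma2} (Extended Dynamics) with $\mathcal{A} = \hgh{\setp}$ and $a = \transpose{(0,0,1)}$: since $p(t) \to \hgh{\setp}$, the projected solution $\proj{p}(t) = P_a\,p(t)$ converges to the projected set $\proj{\setp}$ of \eqref{pp}. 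It then remains only to identify $\proj{\setp}$ concretely. For this $a$ the operator $P_a = I - \hat{a}\transpose{\hat{a}}$ merely zeroes the third coordinate, and by \eqref{path1} and \eqref{eqsurface1} we have $\hgh{\setp} = \{\transpose{(f_1(w), f_2(w), w)} : w \in \mbr[]\}$, so $\proj{\setp} = \{\transpose{(f_1(w), f_2(w), 0)} : w \in \mbr[]\} = \{(x,y,0) : (x,y) \in \phy{\setp}\}$, which is condition 3) of Problem \ref{problem1}. Since $\proj{p}(t) = \transpose{(x(t), y(t), 0)}$ approaches this set and its third coordinate is always $0$, the in-plane trajectory $\proj{p}_2(t) = \transpose{(x(t), y(t))}$ converges to $\phy{\setp}$, and the convergence is global.

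The step I expect to carry the real weight is not this chain of implications but confirming that Lemma \ref{lemma1} genuinely applies to the stretched construction -- in particular that each integral curve is forward complete, so that the limit $t \to \infty$ is meaningful (the introduction warns that finite extensibility is tied to the presence of singular points). A transparent route is to take $V = \tfrac{1}{2}\norm{e_{\setp}(p)}^2$ and differentiate along $\vf$: the propagation term contributes nothing, being orthogonal to each $\nabla\phi_i$, leaving $\dot V = -\transpose{e_{\setp}}\, N K\, e_{\setp}$, where $K = \diag{k_1, k_2}$ and $N$ is the Gram matrix of $\{\nabla\phi_1, \nabla\phi_2\}$, which satisfies $N \succeq I$ since $N - I$ is a rank-one PSD matrix. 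For equal gains this yields $\dot V \le -2k_1 V$ and hence exponential decay of $\norm{e_{\setp}}$; the positivity condition needed for general gains is precisely the content that Assumption \ref{assump2} and Lemma \ref{lemma1} package away. Either way $\norm{e_{\setp}(p(t))} \to 0$ keeps $x - f_1(w)$ and $y - f_2(w)$ bounded, rules out finite-time escape whenever $f'_i$ is bounded (as for the periodic paths of interest), and gives $\dist(p(t), \hgh{\setp}) \le \norm{e_{\setp}(p(t))} \to 0$ directly -- the very property that the projection argument of Lemma \ref{lemma2} then transfers to the plane.
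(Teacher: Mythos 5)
Your proposal is correct and follows essentially the same route as the paper's proof: establish $\hgh{\setc}=\emptyset$ from the orthogonality of the propagation and converging terms (with the third cross-product entry identically $1$), then chain Lemma \ref{lemma1} and Lemma \ref{lemma2} and identify the projected set as $\{(x,y,0):(x,y)\in\phy{\setp}\}$. Your closing Lyapunov computation addressing forward completeness is a useful supplement that the paper instead delegates to the cited result behind Lemma \ref{lemma1}, and you correctly flag that it only closes cleanly for equal gains.
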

	\begin{proof}
		From \eqref{eqgvf} and \eqref{eqsurface1}, the 3D vector field $\vf$ is 
		\[
		\resizebox{0.7\hsize}{!}{$
			\vf(x,y,w) = \matr{f_1'(w) - k_1 \phi_1 \\ f_2'(w) - k_2 \phi_2 \\ 1 + k_1 \phi_1 f_1'(w) + k_2 \phi_2 f_2'(w)}.
			$}
		\]
		As discussed before, the singular set $\hgh{\setc} = \emptyset$. According to Lemma \ref{lemma1} and Lemma \ref{lemma2}, together with $\hgh{\setc}=\emptyset$, then the projected solution $\proj{p}(t)$ will globally asymptotically converge to the projected desired path $\proj{\setp}=\{q \in \mbr[3]: q = P_a \,  \xi, \, \xi \in \hgh{\setp} \}$, where $\hgh{\setp}$ is defined by \eqref{path1} and \eqref{eqsurface1}. Since $\transpose{a} \proj{p}(t) = \transpose{a} P_a\, p(t) = 0$, the third coordinate of the projected solution $w(t) \equiv 0$, meaning that the trajectory $\proj{p}(t)$ lies on the $XY$-plane. Therefore, the 2D trajectory $\proj{p}_2(t)\defeq \transpose{(x(t) , y(t))}$ will globally asymptotically converge to the 2D desired path $\phy{\setp}$.
	\end{proof}
	An example to explain Theorem \ref{thm1} is shown in Fig. \ref{fig:globallemniscate}. 
	\begin{remark}
		The additional coordinate $w$ is reminiscent of the time variable $t$ in trajectory tracking algorithms, but there are major differences. In trajectory tracking, a robot tracks a desired trajectory point which moves independently of the robot states along the desired trajectory; namely, the dynamics of time is open loop $\dot{t}=1$.  In our proposed approach, the point $(f_1(w), f_2(w))$ can be roughly regarded as a counterpart of the trajectory point, but note that the dynamics of $w$, i.e., $\dot{w}(t)=\chi_3(p(t))$, where $\chi_3$ is the third entry of the vector field, is dependent on the robot states, and acts in a closed-loop manner. Note that the robot is not tracking the point $(f_1(w), f_2(w))$. The details and a comparison study are presented in Section \ref{sec6}.
	\end{remark}
	\begin{remark} \label{remark1}
		It is important to highlight three advantages of our approach. 
		
		1) (Path Topology) By \emph{stretching} along the virtual coordinate $w$, our approach transforms a possibly self-intersected 2D desired path $\phy{\setp}$ to a non-self-intersected 3D virtual desired path $\hgh{\setp}$, and thus Assumption \ref{assump1} is satisfied;
		
		2) (Singularity Elimination and Global Convergence) \emph{All} singular points-whether they are saddle points or even stable nodes-in the original 2D guiding vector field $\overline{\vf}$ are eliminated in the 3D guiding vector field $\vf$. Due to the singularity-free vector field $\vf$, the \emph{global} convergence to the 3D virtual desired path $\hgh{\setp}$ is guaranteed, so it is for the 2D physical desired path $\phy{\setp}$ by the extended dynamics. We remark that most of the path-following algorithms in the literature can only guarantee local convergence; 
		
		3) (Surface Functions) Our approach facilitates the acquisition of the surface functions $\phi_i$ of which the intersection of the zero-level sets is the 3D desired path $\hgh{\setp}$ (see \eqref{path1}), once a parametric form of the 2D physical desired path $\phy{\setp}$ is known.
	\end{remark}
	\begin{figure}[tb]
		\centering
		{
			\subfigure[]{
				\includegraphics[width=0.48\columnwidth]{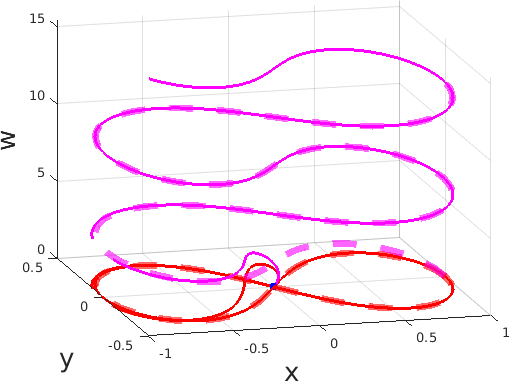}
				\label{fig:globallemniscate1}}%
			\subfigure[]{
				\includegraphics[width=0.48\columnwidth]{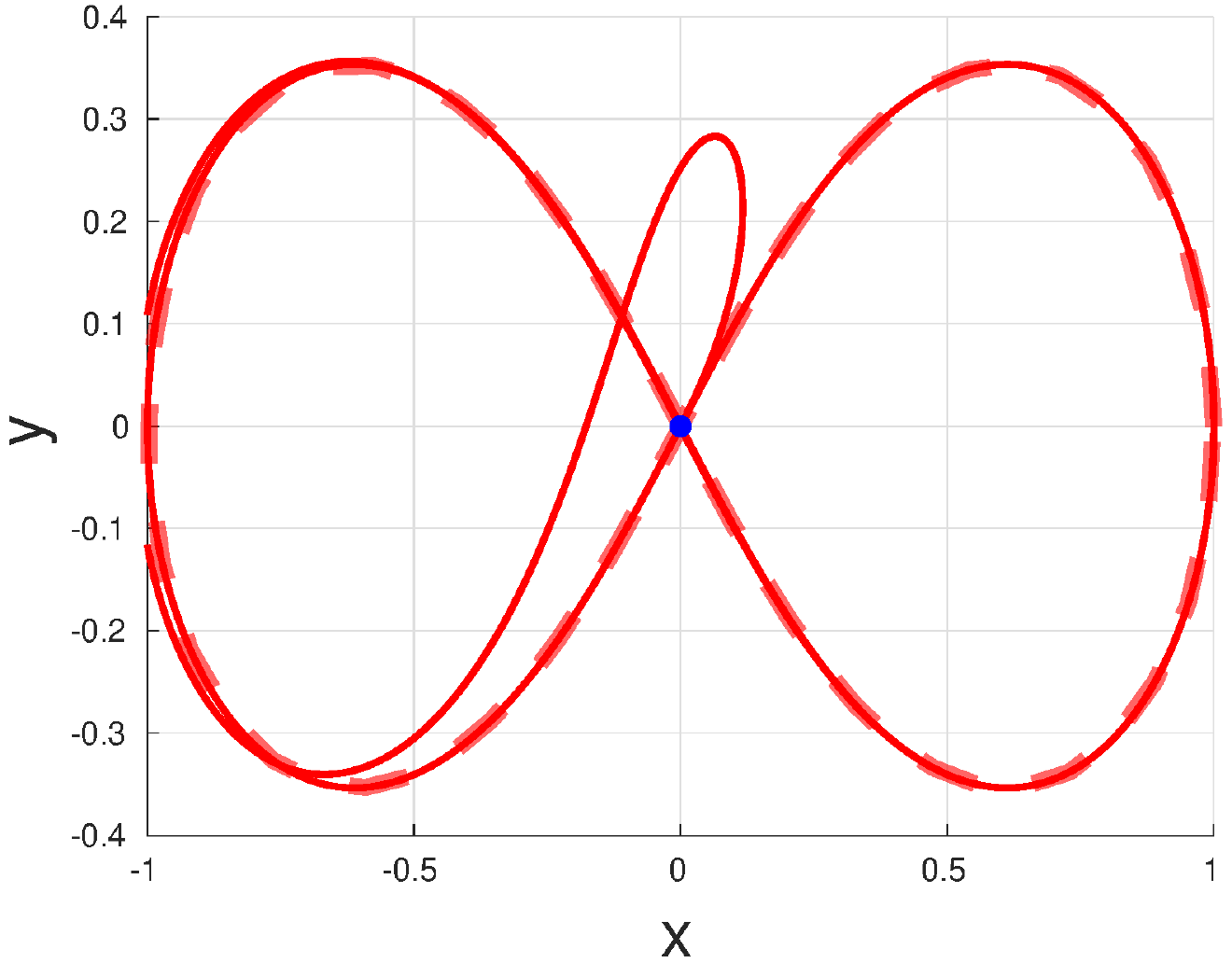}
				\label{fig:globallemniscate3}}%
		}
		\caption{An example of a 2D self-intersected desired path $\phy{\setp}$, of which the parametrization is $x=\cos w / (1+\sin^2 w), y=\sin w \cos w / (1 + \sin^2 w)$. \subref{fig:globallemniscate1} The magenta and red dashed lines are the higher-dimensional desired path $\hgh{\setp}$ and the physical desired path $\phy{\setp}$ respectively. The magenta solid line is the 3D trajectory $p(t)$ of $\dot{p} =  \vf(p)$, and the red solid line is the projected trajectory $\proj{p}_2(t)$ in Theorem \ref{thm1}; \subref{fig:globallemniscate3} The projected trajectory $\proj{p}_2(t)$ in the $XY$-plane. The blue point is the starting point $(0,0)$. Note that the 3D guiding vector field is not depicted as it would render the figure uninterpretable.}
		\label{fig:globallemniscate}
		\vspace{-0.5cm}
	\end{figure}
	
	\subsection{Control Algorithm Design for a Unicycle Robot} \label{subsecc}
	We can exploit the property of the global convergence of integral curves of the guiding vector field to the 2D desired path $\phy{\setp}$. In particular, the control law design principle is to let a mobile robot's orientation eventually aligns with the direction indicated by the guiding vector field. This principle implies that \emph{the guiding vector field only provides guidance signals rather than low-level control commands, and thus it is applicable to any robots whose motions are essentially determined by their orientations}, such as the unicycle model (including the Dubin's car model), the car-like model and the underwater glider model \cite{siciliano2010robotics}. The unicycle robot model is considered as follows:
	\begin{equation} \label{model}
	\dot{x} = v_u \cos \theta \quad \dot{y} = v_u \sin \theta \quad \dot{\theta} = \omega_u,
	\end{equation}
	where $(x,y)$ is the position, $\theta$ is the orientation, $v_u$ is the speed control input and $\omega_u$ is the angular velocity control input. Since the 3D vector field will be used, the generalized 3D velocity vector of the robot needs to be defined as $\dot{p}=\transpose{(\dot{x}, \dot{y}, \dot{w})}$, where $(\dot{x}, \dot{y})$ is the \emph{actual} velocity of the robot as defined in \eqref{model} and $\dot{w}$ is the \emph{virtual} velocity in the additional coordinate that is to be determined later. The control inputs $v_u$, $\omega_u$ and the virtual velocity $\dot{w}$ will be designed such that the orientation and the length of the generalized velocity $\dot{p}=\transpose{(\dot{x}, \dot{y}, \dot{w})}$ will asymptotically become identical to the scaled vector field $s \hat{\vf}$ (i.e., $\dot{p} \to s \hat{\vf}$ as $t \to \infty$), where $s$ is a given positive constant. 
	
	\begin{theorem}
		Suppose the robot model in \eqref{model} is considered, and a 2D parameterized desired path $\phy{\setp} \subseteq \mbr[2]$ is given. The corresponding 3D vector field $\vf: \mbr[3] \to \mbr[3]$ is constructed in Section \ref{sec4}. Assume that the vector field satisfies $\vf_1(p)^2 + \vf_2(p)^2 \ne 0$ for $p \in \mbr[3]$, where $\vf_i$ denotes the $i$-th entry of the vector field $\vf$.  Let
		\begin{subequations} \label{eqcontrollaw}
			\begin{align}
			\dot{w} &= s \hat{\vf}_3,  \label{eqw} \\
			v_u &= s \norm{\vf^p}, \label{eqinput1}\\ 
			\omega_u &= \dot{\theta}_d - k_\theta \transpose{\hat{h}} E \hat{\vf^p}, \label{eqinput2} \\
			\dot{\theta}_d &= \left(\frac{-1}{\norm{\vf^p}} \transpose{\hat{\vf^p}} E J(\vf^p) \dot{p}  \right), \label{thetaddot}
			\end{align}
		\end{subequations}
		where $\hat{(\cdot)}$ is the normalization operator, $s$ and $k_\theta$ are positive constants, $h =\transpose{(\cos \theta, \sin \theta)}$, $\vf^p = \transpose{(\hat{\vf}_1, \hat{\vf}_2)}$,  $E=\left[\begin{smallmatrix}0 & -1 \\ 1 & 0\end{smallmatrix}\right]$, $J(\vf^p)$ is the Jacobian matrix of $\vf^p$ with respect to the generalized position $p \in \mbr[3]$ and $\dot{p}=\transpose{(\dot{x}, \dot{y}, \dot{w})}$ is the generalized velocity. Denote the angle directed from $\hat{\vf^p}$ to $\hat{h}$ by $\beta \in (-\pi, \pi]$. If the initial angle $\beta(0) \in (-\pi, \pi)$, then the generalized velocity $\dot{p}$ will converge asymptotically to the scaled vector field $s \hat{\vf} = s \transpose{(\hat{\vf}_1, \hat{\vf}_2, \hat{\vf}_3)}$ (i.e., $\beta(t) \to 0$). Furthermore, the actual robot trajectory $(x(t), y(t))$ will converge to the physical desired path $\phy{\setp} \subseteq \mbr[2]$ asymptotically as $t \to \infty$.
	\end{theorem}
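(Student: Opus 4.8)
The plan is to reduce the heading error to a scalar autonomous ODE, and then to treat the resulting (vanishing) velocity mismatch as a perturbation of the path-following error dynamics, which inherit the convergence established in Theorem \ref{thm1} via the projection estimate of Lemma \ref{lemma2}. To begin, let $\alpha(t)$ denote the angle of the planar vector $\vf^p(p(t))$, so that $\beta=\theta-\alpha$ and, with the convention for $E$, one has $\transpose{\hat h}E\hat{\vf^p}=\sin\beta$. First I would differentiate $\alpha$: for any nonvanishing planar signal the angular rate obeys $\dot\alpha=-\transpose{\hat{\vf^p}}E\,\dt\vf^p/\norm{\vf^p}$, and since $\dt\vf^p=J(\vf^p)\dot p$ by the chain rule, this right-hand side is \emph{exactly} $\dot\theta_d$ in \eqref{thetaddot}. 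Substituting $\omega_u$ from \eqref{eqinput2} then produces the clean cancellation $\dot\beta=\dot\theta-\dot\alpha=(\dot\theta_d-k_\theta\sin\beta)-\dot\theta_d=-k_\theta\sin\beta$. This is the crux of the design: the feedforward term $\dot\theta_d$ is engineered to match precisely the rotation rate of the field direction, leaving a purely dissipative heading-error equation.

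On $(-\pi,\pi]$ the scalar equation $\dot\beta=-k_\theta\sin\beta$ has equilibria at $0$ and $\pm\pi$; taking $V(\beta)=1-\cos\beta$ gives $\dot V=-k_\theta\sin^2\beta\le 0$, the interval $(-\pi,\pi)$ is forward invariant and $\pm\pi$ are repelling, so $\beta(0)\in(-\pi,\pi)\Rightarrow\beta(t)\to0$. For the velocity mismatch I would note that \eqref{eqw} makes the third components of $\dot p$ and $s\hat\vf$ identical, while the first two differ by $s\norm{\vf^p}(\cos\theta-\cos\alpha,\sin\theta-\sin\alpha)$, a vector of norm $2s\norm{\vf^p}\,|\sin(\beta/2)|$. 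Since $\norm{\vf^p}^2=\hat\vf_1^2+\hat\vf_2^2\le\norm{\hat\vf}^2=1$, this is bounded by $2s|\sin(\beta/2)|\to 0$, so the robot obeys $\dot p=s\hat\vf(p)+\delta(t)$ with $\delta(t)\to0$.

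Because $s\hat\vf=\tfrac{s}{\norm\vf}\vf$ is a positive reparametrisation of $\vf$, the unperturbed flow $\dot p=s\hat\vf(p)$ has the same integral curves as $\vf$ and hence, by Theorem \ref{thm1} (using Lemma \ref{lemma1} with $\hgh{\setc}=\emptyset$), converges to $\hgh{\setp}$. To carry this over to the perturbed robot I would analyse the path-following error $e_{\setp}=\transpose{(\phi_1,\phi_2)}$ directly. Writing $\dot e_{\setp}=J\dot p$ with $J$ the matrix whose rows are $\transpose{\nabla\phi_1},\transpose{\nabla\phi_2}$, the propagation term $\nabla\phi_1\times\nabla\phi_2$ lies in the kernel of $J$, so for the nominal field $J\vf=-MK e_{\setp}$ with $M=J\transpose{J}=I+g\transpose{g}\succeq I$ (where $g=\transpose{(f_1',f_2')}$) and $K=\diag{k_1,k_2}$. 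Taking $V_e=\tfrac12\transpose{e_{\setp}}K e_{\setp}$ yields $\dot V_e=-\tfrac{s}{\norm\vf}\transpose{(Ke_{\setp})}M(Ke_{\setp})+\transpose{e_{\setp}}KJ\delta$, a negative-definite nominal term plus a vanishing perturbation; a standard cascade / vanishing-perturbation argument then gives $e_{\setp}(t)\to0$, i.e. $p(t)\to\hgh{\setp}$. Finally, the contraction estimate of Lemma \ref{lemma2}, $\dist(P_a p(t),\proj{\setp})\le\dist(p(t),\hgh{\setp})\to0$, together with $P_a p=\transpose{(x,y,0)}$ and $\proj{\setp}=\{(x,y,0):(x,y)\in\phy{\setp}\}$, delivers $(x(t),y(t))\to\phy{\setp}$.

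The delicate step is the last one: converting asymptotic velocity alignment into genuine trajectory convergence. The decay coefficient $\tfrac{s}{\norm\vf}$ and the matrix $M$ are time-varying through $w(t)$, and $\norm\vf$ may grow if $f_i'(w(t))$ is unbounded, so one must first secure boundedness of $w(t)$ (or of $f_i'$ along the solution) to keep the nominal decay rate bounded below, and maintain the standing hypothesis $\vf_1^2+\vf_2^2\ne0$ along the trajectory so the control law stays well defined. Only once these are in hand does the cascade of the $\beta$-subsystem into the $e_{\setp}$-subsystem close without risk of finite escape, and I expect the bulk of the rigorous work to concentrate precisely there rather than in the algebraic cancellation of the first step.
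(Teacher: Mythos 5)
Your proposal is correct and, for the heading-alignment step, coincides with the paper's argument: the paper's Lyapunov function $V=\tfrac12\norm{\hat h-\hat{\vf^p}}^2=1-\cos\beta$ with $\dot V=-k_\theta(\transpose{\hat{h}}E\hat{\vf^p})^2=-k_\theta\sin^2\beta$ is exactly the integrated form of your scalar equation $\dot\beta=-k_\theta\sin\beta$, and both hinge on the same observation that $\dot\theta_d$ in \eqref{thetaddot} equals the rotation rate of $\hat{\vf^p}$ (the paper phrases this as the identity $\dot{\hat{\vf^p}}=\dot\theta_d E\hat{\vf^p}$). Your version is arguably cleaner: the paper establishes monotonicity of $|\beta(t)|$ by a somewhat roundabout contradiction argument using $V(t)=1-\cos\beta(t)$, rather than reading the conclusion directly off the autonomous ODE as you do.

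Where you genuinely diverge is the final step. The paper does not itself prove that asymptotic velocity alignment implies convergence of $p(t)$ to $\hgh{\setp}$; it delegates that implication entirely to the citation \cite{yao2020auto} and then invokes Theorem \ref{thm1} and Lemma \ref{lemma2} for the projection onto $\phy{\setp}$. You instead supply an explicit cascade argument: the closed loop is $\dot p=s\hat\vf(p)+\delta(t)$ with $\norm{\delta(t)}\le 2s|\sin(\beta(t)/2)|\to 0$, and the path-following error obeys $\dot e_{\setp}=-\tfrac{s}{\norm{\vf}}MKe_{\setp}+J\delta$ with $M=I+g\transpose{g}\succeq I$, which admits a standard vanishing-perturbation analysis. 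This is a real addition relative to the paper's text, and your closing caveats identify precisely the points a fully rigorous version must settle (and which the cited reference, not this paper, handles): keeping the decay coefficient $\tfrac{s}{\norm{\vf}}$ bounded below along the solution, which requires controlling $f_i'(w(t))$ and hence the growth of $w(t)$, and excluding finite escape so the cascade closes. One small simplification is available to you: the theorem assumes $\vf_1(p)^2+\vf_2(p)^2\ne 0$ for all $p\in\mbr[3]$, so well-posedness of the control law along the trajectory is hypothesised globally and need not be re-established.
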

	\begin{proof}
		We define two vectors in $\mbr[2]$; i.e., $h' \defeq v_u \transpose{(\cos \theta, \sin \theta)}$ and $g' \defeq s \transpose{(\hat{\vf}_1, \hat{\vf}_2)}$. Also define the error (difference) between the generalized velocity $\dot{p}$ and the scaled vector field $s \hat{\vf}$ as below:
		\[
			e_{ori}(t)=\dot{p}-s\hat{\vf} = \matr{v_u \cos\theta - s \hat{\vf}_1 \\ v_u \sin\theta - s \hat{\vf}_2\\ 0} = \matr{h' - g' \\ 0} \in \mbr[3],
		\]
		where the last entry $0$ is due to \eqref{eqw}. Thus one only needs to focus on the first two entries of $e_{ori}(t)$. Note that $\norm{h'}=s \sqrt{\hat{\vf}_1^2 + \hat{\vf}_2^2}=\norm{g'}$,	thus it is possible to proceed to show that $e_{ori}(t) \to 0$ asymptotically. In particular, it suffices to show that the orientation of $h'$ asymptotically aligns with that of $g'$. Note that $\hat{h}=\hat{h'}=\transpose{(\cos \theta, \sin \theta)}$ and $\hat{\vf^p}=\hat{g'}=\transpose{(\hat{\vf}_1, \hat{\vf}_2)} / \sqrt{\hat{\vf}_1^2 + \hat{\vf}_2^2}=\transpose{(\vf_1, \vf_2)}/\sqrt{\vf_1^2+\vf_2^2}$. Let $e'_{ori}= \hat{h} - \hat{\vf^p}$. Choose the Lyapunov function candidate $V = 1/ 2 \, \transpose{e'}_{ori} e'_{ori}$, and its time derivative is
		\begin{equation} \label{eqvdot}
		\begin{split}
		\dot{V} = \transpose{\dot{e}'}_{ori} e'_{ori} 
		&= \transpose{( \dot{\theta} E \hat{h}  - \dot{\theta}_d E \hat{\vf^p})} (\hat{h} - \hat{\vf^p})  \\
		&= (\dot{\theta} - \dot{\theta}_d) \transpose{\hat{h}} E \hat{\vf^p} \\
		&\overset{\eqref{eqinput2}}{=} -k_\theta (\transpose{\hat{h}} E \hat{\vf^p})^2 \le 0.
		\end{split}
		\end{equation}
		The second equation makes use of the identities: $\dot{\hat{h}} = \dot{\theta} E \hat{h}$ and $\dot{\hat{\vf^p}} = \dot{\theta}_d E \hat{\vf^p}$, where $\dot{\theta}_d$ is defined in \eqref{thetaddot}. The third equation is derived by exploiting the facts that $\transpose{E}=-E$ and $\transpose{a} E a = 0$ for any vector $a \in \mbr[2]$. Note that $\dot{V}=0$ if and only if the angle difference between $\hat{h}$ and $\hat{\vf^p}$ is $\beta=0$ or $\beta=\pi$. Since it is assumed that the initial angle difference $\beta(t=0) \ne \pi$, it follows that $\dot{V}(t=0)<0$, and thus there exists a sufficiently small $\epsilon>0$ such that $V(t=\epsilon)<V(t=0)$. It can be shown by contradiction that $|\beta(t)|$ is monotonically decreasing with respect to time $t$. Suppose there exist $0<t_1<t_2$ such that $|\beta(t_1)|<|\beta(t_2)|$. It can be calculated that $V(t)=1-\cos\beta(t)$, and thus $V(t_1)<V(t_2)$, contradicting the decreasing property of $V$. Thus $|\beta(t)|$ is indeed monotonically decreasing. By \eqref{eqvdot}, $|\beta(t)|$ and $V(t)$ tends to $0$, implying that the generalized velocity $\dot{p}$ will converge asymptotically to the scaled vector field $s \hat{\vf}$. Therefore, the generalized trajectory $(x(t), y(t), w(t))$ will converge to the higher-dimensional desired path $\hgh{\setp}$ constructed in Section \ref{sec4} \cite{yao2020auto}. Then by Theorem \ref{thm1}, the actual robot trajectory $(x(t), y(t))$ (i.e., $\proj{p}_2(t)$ in Theorem \ref{thm1})  will converge to the physical desired path $\phy{\setp} \subseteq \mbr[2]$ asymptotically as $t \to \infty$.
	\end{proof}
	\begin{remark}
		A significant feature of this algorithm is the introduction of a virtual coordinate and its dynamics. These do not correspond to any physical quantities in practice, but it is indispensable to form the extended dynamics as shown in Lemma \ref{lemma2}. By setting the third entry of the generalized velocity to be the same as the corresponding entry of the scaled vector field, it has been ensured that the third entry is always ``aligned'' and thus the algorithm mainly deals with the alignment of the other two entries with the scaled vector field. 
	\end{remark}	
	\begin{remark}
		We explain why we do not consider constraints related to the control inputs or the desired path curvatures: As the mobile robot is globally guided by the vector field, it can always re-orient its heading towards the desired path even if it temporarily deviates from the desired path due to these constraints. The fixed-wing aircraft experiment in \cite{yao2020singularity} verifies the effectiveness of the control law, but a more detailed theoretical analysis is left for future work.
	\end{remark}
	
	\section{Experiment} \label{sec5}
	An e-puck robot \cite{mondada2009puck} is employed to follow a self-intersected desired path: the projection of a trefoil knot that is parameterized by  
	$
	x =  \cos(0.02 w)(80\cos(0.03 w) + 160) + 600, \;
	y =  \sin(0.02 w)(80\cos(0.03 w) + 160) + 350, 
	$
	where $w$ is the parameter of the desired path. We use pixels as the distance unit. The robot has on top a data matrix, which is recognized by a overhead camera to obtain the robot's position and orientation. The camera is connected to a computer, where the control algorithm runs. Then, the linear velocity and angular velocity control inputs $v_u$ and $\omega_u$ are transmitted from the computer via a Bluetooth module at a fixed frequency of $20$ Hz to the robot. The initial configuration of the robot is: $(x(0), y(0), \theta(0))=(923, 545, \pi)$ and the initial value of the additional coordinate is $w(0)=0$. The rest of the parameters are chosen to be $s=10, k_1=0.5, k_2=0.2, k_e=50$. The implementation procedure is listed in Algorithm \ref{algo1}, and the experimental results are shown in Fig. \ref{fig_exp}. At the time $t=87s$, the robot was manually moved away from the desired path. However, after that, it headed towards the desired path again to reduce the path-following error. As shown in Fig. \ref{fig:exp2}, the norm of the path-following error characterized by $\norm{(\phi_1, \phi_2)}$ decreased significantly and fluctuated around zero eventually in the presence of sensor noise.
	%
	
	\begin{algorithm}
		\caption{Proposed VF-PF control algorithm}
		\begin{algorithmic} \label{algo1}
			\REQUIRE Parametrization of the desired path in \eqref{eq8}
			\STATE Initialized the virtual coordinate $w(0) = w_0$.
			\WHILE{stop signal not received}
			\STATE Obtain robot states $(x,y,\theta)$ and velocity $(\dot{x}, \dot{y})$.
			\STATE Calculate 3D vector field $\vf$ by \eqref{eqgvf} and \eqref{eqsurface1}.
			\STATE Calculate $\dot{w}$ by \eqref{eqw} and let $\dot{p}=\transpose{(\dot{x},\dot{y},\dot{w})}$.
			\STATE Calculate $v_u$ and $\omega_u$ by \eqref{eqinput1} and \eqref{eqinput2}.
			\STATE Apply the control inputs $v_u$ and $\omega_u$ to the robot.
			\STATE Update $w \leftarrow w + \dot{w} \; \Delta t$, where $\Delta t$ is the iteration lapses.
			\ENDWHILE
		\end{algorithmic}
	\end{algorithm}
	\section{Discussion: Path or trajectory tracking?} \label{sec6}
	In this section, we discuss whether our presented algorithm falls into the category of path tracking (i.e., path following) or trajectory tracking. While our guiding vector field is the standard output for the path tracking approach, we argue that it is also an extension of the trajectory tracking approach. Therefore, our approach combines and extends elements from both algorithms.
	\begin{figure}[tb]
	\centering
	{
		\subfigure[]{
			\includegraphics[width=0.5\linewidth]{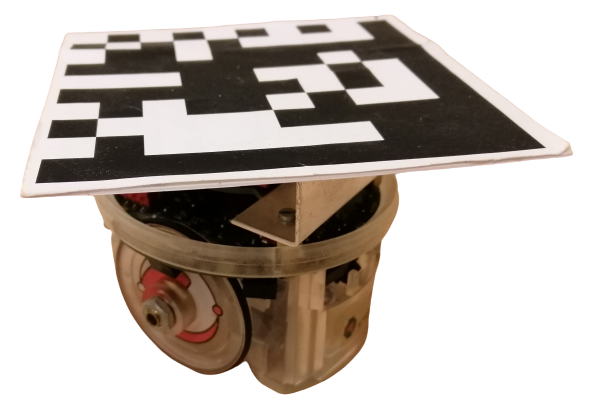}
			\label{fig:exp0}}
		\subfigure[]{
			\includegraphics[width=0.5\linewidth]{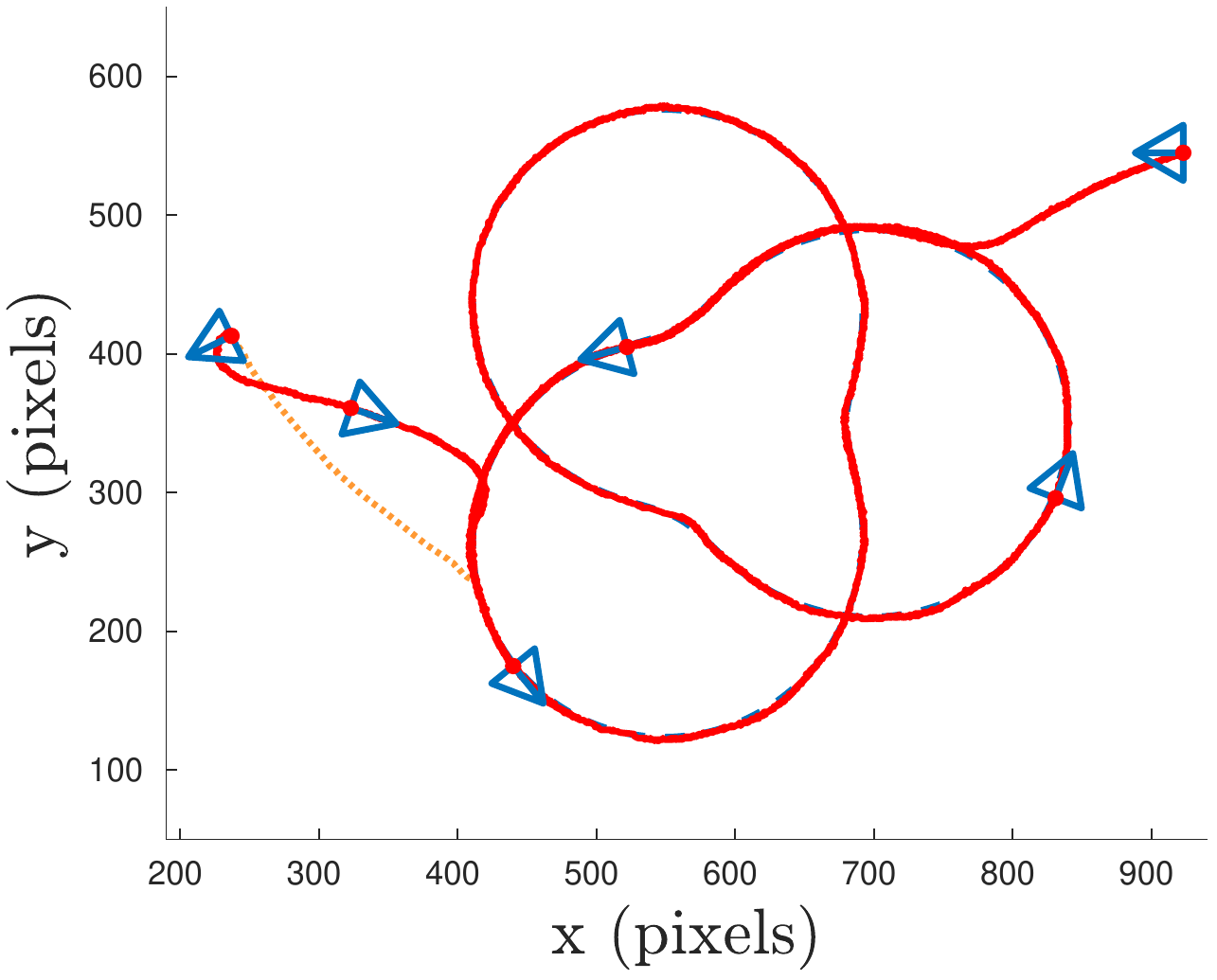}
			\label{fig:exp1}}%
		\subfigure[]{
			\includegraphics[width=0.47\linewidth]{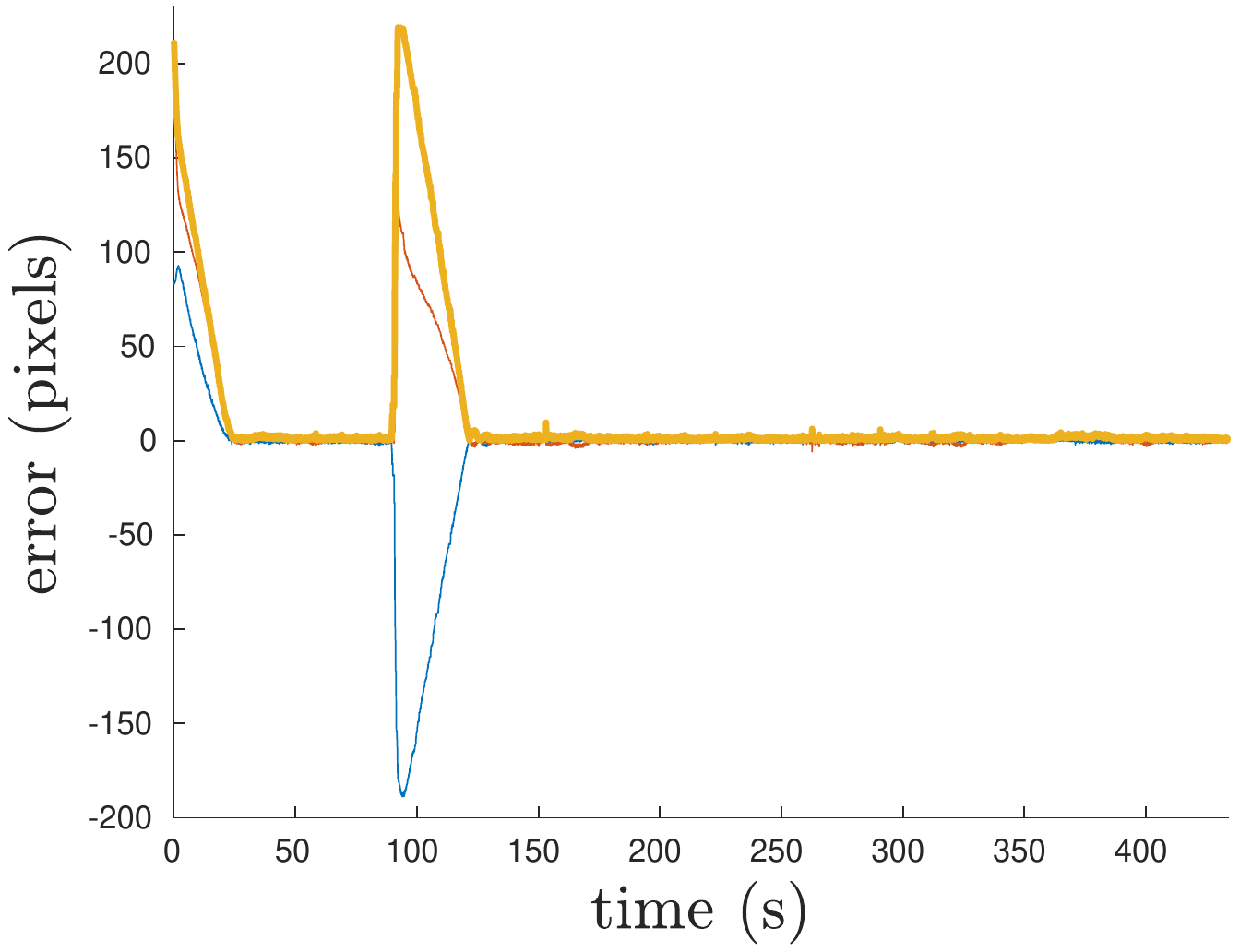}
			\label{fig:exp2}}%
	}
	\vspace{-0.5cm}
	\caption{Experiment results. \subref{fig:exp0} The e-puck robot;  \subref{fig:exp1} Visualization of the experiment data. The red solid line is the actual robot trajectory, and the blue dashed line is the desired path (mostly covered by the robot trajectory). Note that at $87s$ the robot was manually moved to the leftmost position in the figure (indicated by the orange dashed line). The triangles represent the robot positions at different time instants, where the medians of the triangles pointing from the edge to the vertex indicate the robot headings; \subref{fig:exp2} The path-following errors $\phi_1$, $\phi_2$ and $\norm{(\phi_1, \phi_2)}$ represented by blue, red and yellow lines respectively.} 
	\label{fig_exp} 
	\vspace{-0.5cm}
\end{figure} 

	%
	
	Firstly, note that a trajectory tracking algorithm is designed around the desired trajectory that specifies desired positions, velocities, etc. at specific time instants.  The ``drawn path'' consists of a sequence of \emph{desired trajectory points}; namely, the time evolution of the desired position. By contrast, our proposed algorithm treats the desired path as a geometric object. For VF-PF algorithms, a parametrization of the desired path is not required to generate a ``desired trajectory point''. However, we exploit the parametrization in Theorem \ref{thm1} to facilitate the expressions of the surface functions $\phi_i$, of which the intersection of the zero-level sets is precisely the (3D) desired path $\hgh{\setp}$. Note that the subsequent derivation of the higher-dimensional guiding vector field depends merely on $\phi_i$, independent of the specific parametrization of the physical desired path $\phy{\setp}$.
	
	Secondly, the additional coordinate $w$ in our proposed VF-PF algorithm seems similar to but in fact differs from the time $t$ in trajectory tracking. In trajectory tracking, we remind that $\dot t = 1$; i.e., the desired trajectory $r(t)$ is prescribed, and it evolves as time elapses in an \emph{open-loop} manner independent of the states of the robot. However, in our approach, the dynamics of the additional coordinate $\dot{w}(t)=s \hat{\vf}_3(p(t))$ in \eqref{eqw} are in the closed-loop with the states of the robot (i.e., $p(t)$). Consequently, we improve the performance of our algorithm, e.g., under noisy measurements, with respect to the standard trajectory tracking approach. This claim is further justified by numerical experiments and theoretical studies as follows\footnote{Supplementary material: \url{http://tiny.cc/cdc20_yao}.}.
	
	Using the unicycle robot model in \eqref{model}, we compare the proposed VF-PF algorithm in Section \ref{sec4} with the nonlinear trajectory tracking algorithm introduced in the classic monograph \cite[p. 506]{siciliano2010robotics}. For path following, we choose the desired path as the projection of a Lissajous knot \cite{bogle1994lissajous} parameterized as follows: $x =  250 \, \cos(0.06 w + 0.1) + 600, \;
	y =  250 \, \cos(0.08 w + 0.7) + 350 $,
	where $w \in \mbr[]$ is the parameter. Then we use \eqref{eqgvf}, \eqref{eqsurface1} to create the 3D vector field and \eqref{eqcontrollaw} to calculate the control inputs $v_u$ and $\omega_u$. The control gains are $k_1, k_2$ for the converging term of the vector field and $k_\theta$ for the angular control input $\omega_u$. For trajectory tracking, the prescribed trajectory is obtained by replacing the path parameter $w$ by time $t$. Note that the dynamics of $t$ are trivially the \emph{open-loop} $\dot t = 1$, while the dynamics of the path parameter $w$ are in the \emph{closed-loop} \eqref{eqw}. For ease of explanation, the desired trajectory is denoted by $(x_d(t), y_d(t))$, and the feasible desired heading $\theta_d(t)$ is computed from $(x_d(t), y_d(t))$ \cite[p. 503]{siciliano2010robotics}. The control inputs for the trajectory tracking algorithm are as follows:
	\begin{subequations} \label{eqtrajtrack}
	\begin{align}
	e' &\defeq \left[\begin{matrix}e'_1 \\ e'_2 \\ e'_3\end{matrix}\right] = \left[\begin{matrix}\cos\theta & \sin\theta & 0 \\ - \sin\theta & \cos\theta & 0 \\ 0 & 0 & 1\end{matrix}\right] \left[\begin{matrix}x_d - x \\ y_d - y \\ \theta_d - \theta \end{matrix}\right]  \\
	u_1 &= - k^{trj}_{1} e'_1 , \;
	u_2 = -k^{trj}_2 v_d e'_2 \sin e'_3 / e'_3 - k^{trj}_3 e'_3 \\
	v_u &= v_d \cos e'_3 - u_1, \;\; 	\omega_u = \omega_d - u_2,
	\end{align}
	\end{subequations}
	where $v_d(t) = \sqrt{\dot{x}_d^2(t) + \dot{y}_d^2(t)}$, $\omega_d(t)=\big(\ddot{y}_d(t) \dot{x}_d(t) - \ddot{x}_d(t)\dot{y}_d(t) \big) / \big(\dot{x}_d^2(t) + \dot{y}_d^2(t) \big)$, and $k^{trj}_1, k^{trj}_2, k^{trj}_3$ are positive control gains. In both numerical simulations, the initial positions ($x(0), y(0)$) and orientations $\theta(0)$ of the robot are the same. For path following, the initial value of the additional coordinate is $w(0)=0$, while correspondingly for trajectory tracking, the initial time instant is $t=0$. In addition, we choose the same control gains for these two simulations; namely, $k_1=k_2=k^{trj}_1=k^{trj}_2=0.05, k_\theta=k^{trj}_3=1$.
	\begin{figure}[tb]
		\centering
		{
			\subfigure[]{
				\hspace*{-0.5cm}
				\includegraphics[width=0.52\columnwidth]{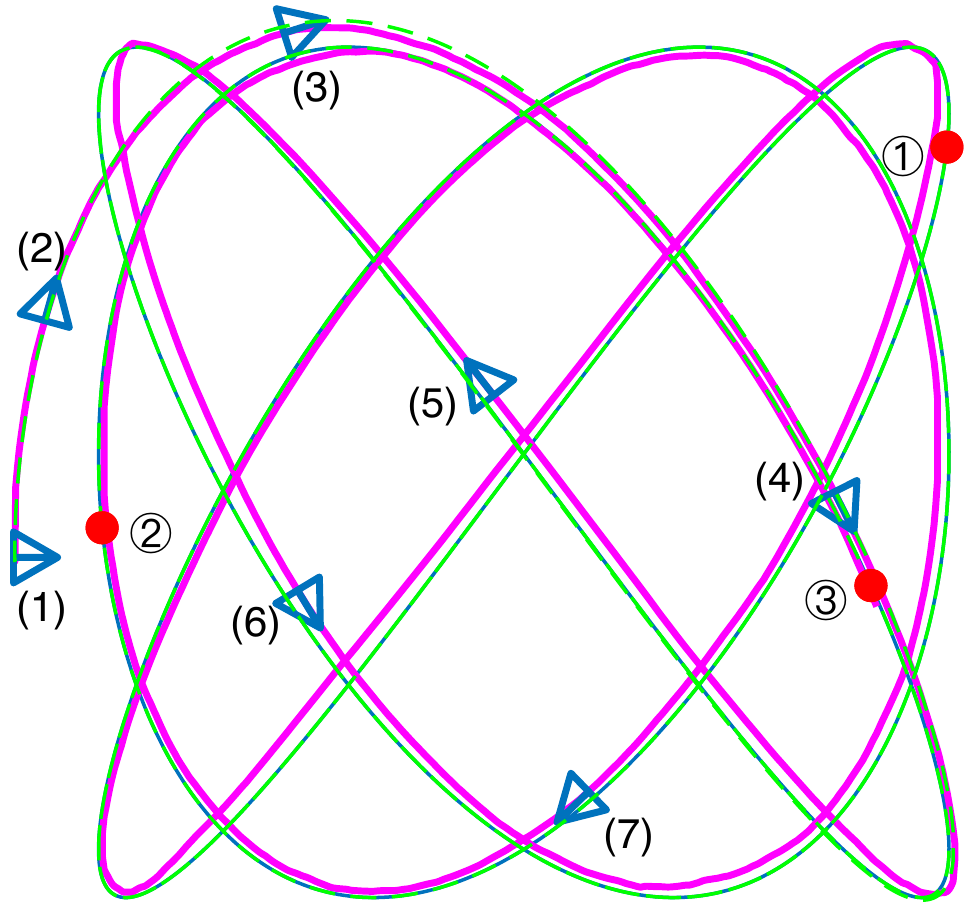}
				\label{fig:sim1}}%
				\hspace*{-0.2cm}
			\subfigure[]{
				\includegraphics[width=0.52\columnwidth]{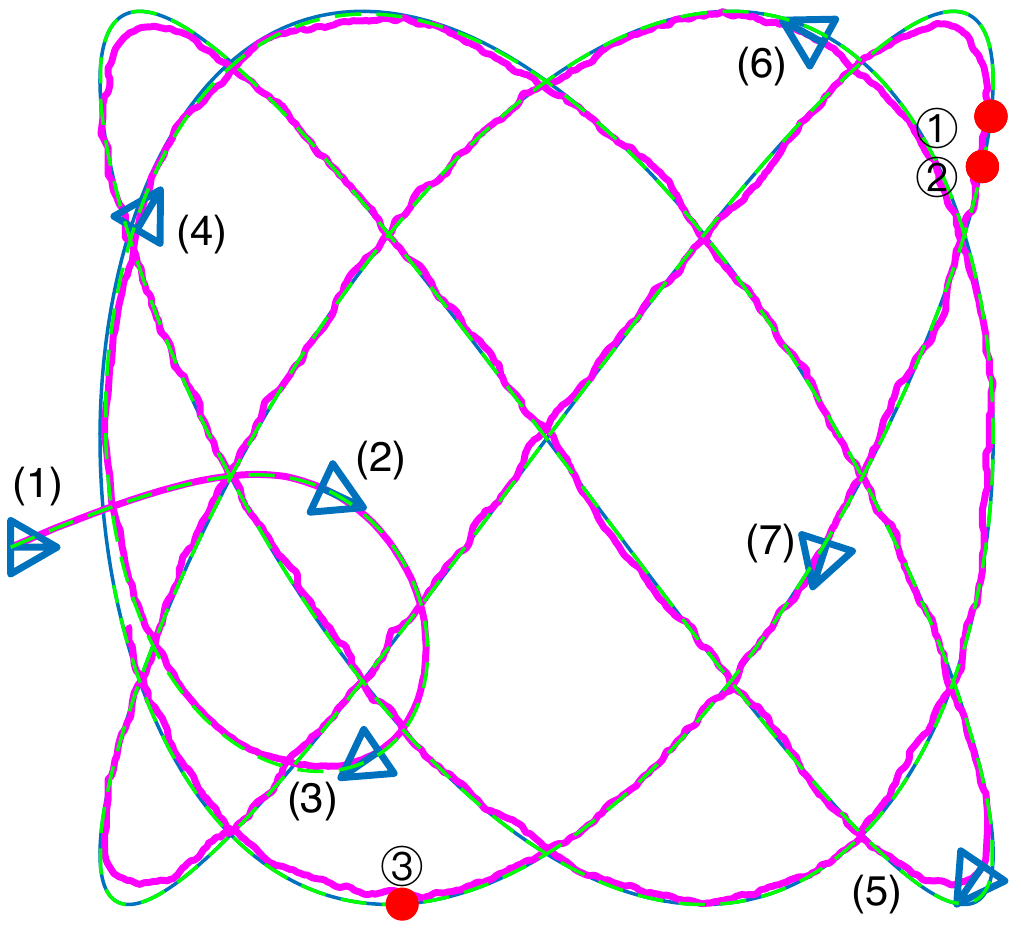}
				\label{fig:sim2}}%
		}
		\caption{\subref{fig:sim1} and \subref{fig:sim2} are simulation results with the proposed VF-PF algorithm and the nonlinear trajectory tracking algorithm respectively. The blue solid lines are the desired paths/trajectories; the green dash lines and magenta solid lines are the robot trajectories without white noise and with white noise added to the perceived robot positions respectively. The small blue triangles with labels represent a sequence of robot positions in the case with noise. The red points with labels are the \emph{guiding points} for path following and \emph{desired trajectory points} for trajectory tracking at $t=0, 2, 30$ s respectively.}
		\label{fig_sim} 
		\vspace{-0.5cm}
	\end{figure}

	If the measurements of the robot positions are accurate (i.e., no white noise is added), then both algorithms enable the robot to follow/track the desired path/trajectory accurately (see the green dash lines in Fig. \ref{fig_sim}). However, by checking the robot positions labeled by $(1), (2)$ and $(3)$ in Fig. \ref{fig:sim1} and Fig. \ref{fig:sim2}, one notes that the robot's initial behaviors are quite different. In the trajectory tracking case, the robot revolves before aligning with the desired path, while in the VF-PF case, the robot moves more ``naturally'': it ``aligns'' with the desired path immediately and reduce the distance to the desired path later. The ``unnatural'' movement for the trajectory tracking algorithm is due to the ``open-loop'' dynamics of the desired trajectory point. In the beginning, the desired trajectory point $(x_d(0), y_d(0))$ is far away from the robot's initial position (see the red point labeled by \ding{172} in Fig. \ref{fig:sim2}), and thus the robot needs to approach it to reduce the tracking error. Since the movement of the desired trajectory point is independent of the robot position, the robot needs to steer its heading continually as the trajectory point moves along the desired path; the desired trajectory points at $t=2$ s and $t=30$ s are illustrated as two red points labeled by \ding{173} and \ding{174} respectively in Fig. \ref{fig:sim2}. In view of \eqref{eqsurface1}, $(f_1(w(t)), f_2(w(t)))$ is termed the \emph{guiding point} as the counterpart of the desired trajectory point. By contrast, although the guiding point $(f_1(w(t)), f_2(w(t)))$ is also on the far right side at $t=0$ (the same as the desired trajectory point), it ``rushes'' to the vicinity of the robot's initial position shortly after $2$ seconds (the red point labeled by \ding{173} in Fig. \ref{fig:sim1}), activated by the closed-loop dynamics  $\dot{w}(t)=s \hat{\vf}_3(p(t))$ and the converging property of the integral curves of the vector field. This enables the robot to approach the ``nearest point'' of the desired path, avoiding the unnatural revolving behavior initially.
	
	To experimentally compare the differences of the two algorithms under artificial sensor noise, we intentionally add a significant amount of band-limited white noise (power: 10, sampling time: 0.1 s) to the positions perceived by the robot. As seen from the magenta lines in Fig. \ref{fig_sim}, the robot trajectory for the path following algorithm is visually more smooth than that for the trajectory tracking algorithm. This is partly attributed to the propagation term in \eqref{eqgvf}, which guarantees that there is always a nonzero tangential term to the level curves of the desired path, not being affected by the converging term due to orthogonality (see the Appendix for the theoretical analysis). This property makes our algorithm appealing for fixed-wing aircraft motion control in the presence of sensor noise: By tuning the gains of the vector field such that the propagation term dominates, we can guarantee that the control will not demand the UAV's heading to vibrate intensively, or even move backwards undesirably. By contrast, there is no tangential term in the trajectory tracking algorithm, and thus the robot heading vibrates persistently as the perceived robot position varies. In some extreme cases, the robot might move backward if the noisy perceived robot position overtook the desired trajectory point.	

	\section{Conclusion and Future Work} \label{sec7}
	We have proposed a 3D singularity-free guiding vector field to enable a robot to follow a self-intersected desired path in 2D by re-designing the surface functions. Moreover, we have rigorously proven, and experimentally validated using a unicycle robot, the global convergence of the guiding vector field's integral curves to self-intersected desired paths. Three advantages of our approach are summarized in Remark \ref{remark1}, regarding the path topology, singularity elimination and global convergence, and the surface functions. We have also compared our proposed VF-PF algorithm with a nonlinear trajectory tracking algorithm and the existing VF-PF algorithms, and concluded that our approach is a significant extension. Note that our proposed algorithm is not restricted to 2D desired paths; in fact, 3D or even higher-dimensional desired paths (possibly in robot configuration spaces) are still applicable using the same idea \cite{yao2020singularity}. In addition, we emphasize the potential of utilizing our approach to create singularity-free guiding vector fields for \emph{global} robot navigation in obstacle-populated environments \cite{rimon1992exact,yao2019integrated}.
	\appendices
	\section{Theoretical analysis for Section \ref{sec6}}
	The vector field \eqref{eqgvf} has two terms as below:
	\begin{equation*}
	\resizebox{.8\hsize}{!}{$
		\vf(p) = \underbrace{\nabla \phi_1(p) \times \nabla \phi_2(p)}_{\tau(p)}  +  \underbrace{\sum_{i=1}^{2} -k_i \phi_i(p) \nabla \phi_i(p)}_{\iota(p)},
		$}
	\end{equation*}
	where the propagation term $\tau(p)$ provides a propagation direction along the desired path, and the converging term $\iota(p)$ ``pushes'' the trajectory towards the desired path. In the presence of considerable noise, the converging term $\iota(p)$ affects greatly the ``smoothness'' of the actual trajectory as the perceived robot position fluctuates. 
	However, the propagation term $\tau(p)$ alleviates this negative effect, by always providing a forward direction.  In addition, in the vicinity of the desired path, $\phi_i \approx 0$, and hence the converging term $\iota(p) \approx 0$, and the propagation term $\tau(p)$ dominates the robot motion. In applications where backward motion is not desirable (e.g., for fixed-wing UAVs), one can manually add a gain to the propagation term $\tau(p)$ such that it always dominates the converging term (or lower the gain for the converging term).  Therefore, the robot trajectory is less sensitive to the noise perturbation.
	
	In fact, it can be proved that our approach has a robustness property due to the closed-loop dynamics of $w$; precisely, the path-following error dynamics is locally input-to-state stable. Denote the disturbance by $d$ and define a set $\mathcal{E}_\alpha \defeq \{p \in \mbr[3]: \norm{(\phi_1(p), \phi_2(p))} < \alpha \}$ with a positive constant $\alpha$. Due to the noise perturbation on the perceived robot position, it is further introduced into the vector field. Hence the perturbed system is:
	\begin{equation} \label{eq_perturb}
	\dot{p}(t)=\vf(p(t))+d(t),
	\end{equation}
	where $d: \mathbb{R}_{\ge 0} \to \mathbb{R}^n$, is assumed to be a bounded and piecewise continuous function of time $t$ for all $t\ge0$. It can be shown that the path-following error dynamics
	$
	\dot{e}(t) = \transpose{N}(p(t)) (\vf(p(t)) + d(t))
	$
	is locally input-to-state stable. 
	\begin{prop}(GVF Robustness) \label{prop_iss}
		Consider the perturbed system \eqref{eq_perturb}. If $|f_i'(w)|$ is upper bounded in $\mathcal{E}_\alpha$ for $i=1,2$, then the error dynamics is locally input-to-state stable. Namely, there exists positive constants $\delta$ and $r$, such that if the initial condition $p(0) \in \mathcal{E}_\delta$ and the disturbance is uniformly bounded by $r$ (i.e., $\norm{d(t)} \le r$ for $t \ge 0$), then the path-following error $\norm{e(t)}$ is uniformly ultimately bounded. In addition, if the disturbance is vanishing; that is, $\norm{d(t)} \xrightarrow{t\to\infty}0$, then the norm of the path-following error $\norm{e(t)}$ is also vanishing. 
	\end{prop}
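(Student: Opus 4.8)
The plan is to reduce the perturbed flow in \eqref{eq_perturb} to a closed-form error system, exhibit a \emph{weighted} quadratic Lyapunov function, and then invoke a local ISS theorem. First I would collect the two gradients as columns of $N(p) = [\nabla\phi_1(p), \nabla\phi_2(p)]$, so that $e = \transpose{(\phi_1,\phi_2)}$ obeys $\dot e = \transpose{N}\dot p = \transpose{N}(\vf + d)$. The crucial observation is that the propagation term $\tau = \nabla\phi_1 \times \nabla\phi_2$ is orthogonal to both gradients, whence $\transpose{N}\tau = \boldsymbol{0}$, while the converging term can be written as $\iota = -N K e$ with $K = \diag{k_1, k_2}$. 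Substituting gives the clean error dynamics $\dot e = -M(p) K e + \transpose{N}(p)\, d$, where $M(p) \defeq \transpose{N}N$.

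Next I would record two structural facts about $M$. Writing $g = \transpose{(f_1'(w), f_2'(w))}$ one has $M = I + g\transpose{g}$, so $M \succeq I$ and hence $\lambdamin{M} \ge 1$ everywhere; moreover $\norm{M} = \lambdamax{M} = 1 + \norm{g}^2$, which is uniformly bounded on $\mathcal{E}_\alpha$ precisely because $|f_i'(w)|$ is assumed bounded there. Call the resulting bound $\bar N^2 \defeq \sup_{\mathcal{E}_\alpha}\norm{M}$, so that $\norm{\transpose{N}} = \norm{N} \le \bar N$ on $\mathcal{E}_\alpha$.

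The main obstacle is that $MK$ is not symmetric and $MK + KM$ need \emph{not} be positive definite when $k_1 \ne k_2$ and the slopes $f_i'$ are large, so the naive Lyapunov function $\tfrac12\norm{e}^2$ fails. I would instead take $V = \tfrac12 \transpose{e} K e$, whose derivative along the error dynamics is $\dot V = -\transpose{(Ke)} M (Ke) + \transpose{e} K \transpose{N} d$. The quadratic term is now genuinely coercive: $\transpose{(Ke)}M(Ke) \ge \lambdamin{M}\norm{Ke}^2 \ge k_{\min}^2\norm{e}^2$, using $\lambdamin{M}\ge 1$ and $\norm{Ke}\ge k_{\min}\norm{e}$ with $k_{\min} \defeq \min\{k_1,k_2\}$. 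Bounding the cross term by $k_{\max}\bar N \norm{e}\norm{d}$ with $k_{\max} \defeq \max\{k_1,k_2\}$ yields, on $\mathcal{E}_\alpha$, the estimate $\dot V \le -k_{\min}^2\norm{e}^2 + k_{\max}\bar N\norm{e}\norm{d}$.

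Finally I would extract the ISS-Lyapunov inequality in standard form: for any $\theta\in(0,1)$, $\dot V \le -(1-\theta)k_{\min}^2\norm{e}^2$ whenever $\norm{e} \ge \rho(\norm{d}) \defeq \frac{k_{\max}\bar N}{\theta k_{\min}^2}\norm{d}$. Since $V$ is sandwiched between $\frac{k_{\min}}{2}\norm{e}^2$ and $\frac{k_{\max}}{2}\norm{e}^2$, choosing $\delta$ and $r$ small enough that the relevant sublevel set of $V$ stays inside $\mathcal{E}_\alpha$ (where every bound above is valid) lets me apply the local ISS theorem \cite{khalil2002nonlinear}. This produces the class-$\mathcal{KL}$ plus class-$\mathcal{K}$ estimate on $\norm{e(t)}$, hence uniform ultimate boundedness when $\norm{d}\le r$, and convergence $\norm{e(t)}\to 0$ for vanishing $d$ via the converging-input-converging-state property. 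The two points I expect to spend real care on are the non-commutativity issue (resolved by the weighted $V$) and the bookkeeping that keeps the trajectory inside $\mathcal{E}_\alpha$ so that $\bar N$ remains a legitimate bound throughout.
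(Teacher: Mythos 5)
Your argument is correct, and it follows the same route that the paper itself gestures at: the paper's own ``proof'' is only a one-line deferral to \cite[Theorem 3]{yao2020auto}, but the error dynamics $\dot e = \transpose{N}(p)(\vf(p)+d)$ that you derive is exactly the one the appendix writes down, and the Lyapunov function you end up with, $V=\tfrac12\transpose{e}Ke$, is the weighted quadratic form used in that referenced analysis. Your key computations check out: with $\phi_1=x-f_1(w)$, $\phi_2=y-f_2(w)$ one indeed gets $M=\transpose{N}N=I+g\transpose{g}\succeq I$ with $\lambdamax{M}=1+\norm{g}^2$ controlled by the bound on $|f_i'|$ over $\mathcal{E}_\alpha$; the propagation term is annihilated by $\transpose{N}$; and $\dot V=-\transpose{(Ke)}M(Ke)+\transpose{e}K\transpose{N}d\le -k_{\min}^2\norm{e}^2+k_{\max}\bar N\norm{e}\norm{d}$ is a legitimate local ISS--Lyapunov inequality. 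Your observation that the unweighted $\tfrac12\norm{e}^2$ can fail when $k_1\ne k_2$ and the slopes are large is a genuine and correct point, and the weighted $V$ resolves it cleanly. The only bookkeeping to be explicit about (which you already flag) is choosing $\delta<\alpha$ and $r$ so that the relevant sublevel set of $V$ is forward invariant inside $\mathcal{E}_\alpha$, which is needed both for $\bar N$ to remain valid and for the converging-input--converging-state conclusion in the vanishing-disturbance case. In short: a complete, self-contained proof where the paper provides only a citation.
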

	\begin{proof}
		The proof is similar to \cite[Theorem 3]{yao2020auto}.
	\end{proof}
	
	
	\section*{Acknowledgments}
	We thank Dr. Zhiyong Sun for his valuable feedback, and Bohuan Lin for the fruitful discussions.
	
	\bibliographystyle{IEEEtran}
	\bibliography{ref}

\end{document}